\DeclareSymbolFont{rsfscript}{OMS}{rsfs}{m}{n}
\DeclareSymbolFontAlphabet{\mathrsfs}{rsfscript}
\newcommand{\keywords}[1]{\par\addvspace\baselineskip
\noindent\keywordname\enspace\ignorespaces#1}
\newtheorem{criterion}{Criterion}
\spnewtheorem*{conjecture*}{Conjecture}{\bfseries}{\itshape}
\DeclareMathOperator{\rt}{rt}
\DeclareMathOperator{\DS}{DS}
\DeclareMathOperator{\Suff}{Suff}
\title{Algebraic synchronization criterion\\and computing reset words}
\titlerunning{Algebraic synchronization criterion and computing reset words}
\author{Mikhail~Berlinkov\inst{1} \and Marek~Szyku{\l}a\inst{2}} \institute{Institute of Mathematics
and Computer Science, Ural Federal University, Russia \and Institute
of Computer Science, University of Wroc{\l}aw, Poland}
\authorrunning{M.~Berlinkov and M.~Szyku{\l}a}
\begin{document}
\maketitle

\begin{abstract}
We refine a uniform algebraic approach for deriving upper bounds on reset
thresholds of synchronizing automata. We express the condition that an
automaton is synchronizing in terms of linear algebra, and obtain
upper bounds for the reset thresholds of automata with a short word
of a small rank. The results are applied to make several improvements in the area.

We improve the best general upper bound for reset thresholds of
finite prefix codes (Huffman codes): we show that an $n$-state
synchronizing decoder has a reset word of length at most $O(n \log^3
n)$. In addition to that, we prove that the expected reset threshold of a uniformly random synchronizing binary $n$-state decoder is at most $O(n \log n)$. We also show that for any non-unary alphabet there exist decoders whose reset threshold is in $\varTheta(n)$.

We prove the \v{C}ern\'{y} conjecture for $n$-state automata with a letter of rank at most $\sqrt[3]{6n-6}$. In another corollary, based on the recent results of Nicaud, we show that the probability that the \v{C}ern\'y conjecture does not hold for a random synchronizing binary automaton is exponentially small in terms of the number of states, and also that the expected value of the reset threshold of an $n$-state random synchronizing binary automaton is at most $n^{3/2+o(1)}$.

Moreover, reset words of lengths within all of our bounds are computable in polynomial time. We present suitable algorithms for this task for various classes of automata, such as (quasi-)one-cluster and (quasi-)Eulerian automata, for which our results can be applied.

\keywords{\v{C}ern\'{y} conjecture, Eulerian automaton, Huffman code, one-cluster automaton, prefix code, random automaton, reset word, reset threshold, synchronizing automaton}
\end{abstract}


\section{Introduction}

We deal with \emph{deterministic finite automata} (\emph{DFA})
$\mathrsfs{A} = (Q, \Sigma, \delta)$, where $Q$ is a non-empty
\emph{set of states}, $\Sigma$ is a non-empty \emph{alphabet}, and
$\delta\colon Q \times \Sigma \mapsto Q$ is the complete
\emph{transition function}. We extend $\delta$ to $Q \times
\Sigma^*$ and $2^Q \times \Sigma^*$ as usual, and for the image
(resp. preimage) of a set $S$ under a word $w$ we write shortly
$S.w$ (resp. $S.w^{-1}$). We denote $\Sigma^{\le c} = \{w \in
\Sigma^*\colon |w| \le c\}$, the set of all words over $\Sigma$ of
length at most $c$. The empty word is denoted by $\varepsilon$.
Throughout the paper, by $n$ we denote the cardinality $|Q|$, and by
$k$ we denote $|\Sigma|$.

A word $w$ \emph{compresses} a subset $S \subseteq Q$ if $|S.w| <
|S|$. Then we say that $S$ is \emph{compressible}. The \emph{rank}
of a word $w$ is $|Q.w|$. A \emph{reset word} or a
\emph{synchronizing word} is a word $w \in \Sigma^*$ of rank $1$,
that is, $w$ takes the automaton to a particular state no matter of
the current state. An automaton is called \emph{synchronizing} if it
possesses a reset word. An example of a synchronizing automaton from the \v{C}ern\'{y} series~\cite{Cerny1964} is presented in Fig.~\ref{fig:4rcp}(left). One can verify that its shortest reset word is $ba^3ba^3b$. The length of the shortest reset word is called the \emph{reset threshold} and is denoted by $\rt(\mathrsfs{A})$.

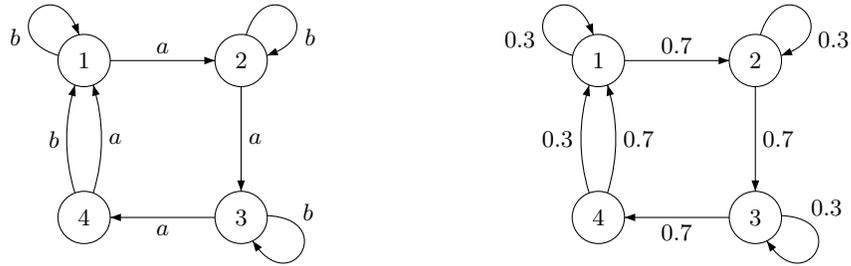
\begin{figure}[ht]
 \begin{center}
  \unitlength=3.3pt
 \begin{picture}(18,26)(20,-3)
    \gasset{Nw=6,Nh=6,Nmr=3,loopdiam=5}
    \thinlines
    \node(A1)(0,18){$1$}
    \node(A2)(18,18){$2$}
    \node(A3)(18,0){$3$}
    \node(A4)(0,0){$4$}
    \drawloop[loopangle=135,ELpos=30](A1){$b$}
    \drawloop[loopangle=45,ELpos=70](A2){$b$}
    \drawloop[loopangle=-30,ELpos=25](A3){$b$}
    \drawedge(A1,A2){$a$}
    \drawedge(A2,A3){$a$}
    \drawedge(A3,A4){$a$}
    \drawedge[curvedepth=2](A4,A1){$b$}
    \drawedge[ELside=r, curvedepth=-2](A4,A1){$a$}
 \end{picture}
 \begin{picture}(18,26)(-20,-3)
   \gasset{Nw=6,Nh=6,Nmr=3,loopdiam=5}
    \thinlines
    \node(A1)(0,18){$1$}
    \node(A2)(18,18){$2$}
    \node(A3)(18,0){$3$}
    \node(A4)(0,0){$4$}
    \drawloop[loopangle=135,ELpos=30](A1){$0.3$}
    \drawloop[loopangle=45,ELpos=70](A2){$0.3$}
    \drawloop[loopangle=-30,ELpos=25](A3){$0.3$}
    \drawedge(A1,A2){$0.7$}
    \drawedge(A2,A3){$0.7$}
    \drawedge(A3,A4){$0.7$}
    \drawedge[curvedepth=2](A4,A1){$0.3$}
    \drawedge[ELside=r, curvedepth=-2](A4,A1){$0.7$}
 \end{picture}
 \end{center}
 \caption{The automaton $\mathrsfs{C}_4$ and the associated Markov chain for $P(a) = 0.7, P(b) = 0.3$.}
 \label{fig:4rcp}
\end{figure}

Synchronizing automata serve as transparent and natural models of
various systems in many applications in various fields, such as coding theory, DNA-computing,
robotics, testing of reactive systems, and theory of information
sources. They also reveal interesting connections with symbolic
dynamics, language theory, and many other parts of mathematics. For a detailed introduction to the theory of synchronizing automata we
refer the reader to the
surveys~\cite{KariVolkov2013Handbook,Volkov2008Survey}, and for a
review of relations with coding theory to~\cite{Jurgensen2008}.

In various applications, reset words allow to reestablish the control under the system modeled by an automaton. The reset threshold of an automaton serves as a natural measure of synchronization.
Naturally, the shorter reset word the better, so from both theoretical and practical points of view it is important to compute the reset threshold, and shortest or short enough reset words.

In~1964 \v{C}ern\'{y}~\cite{Cerny1964} constructed for each $n>1$ a
synchronizing automaton $\mathrsfs{C}_n$ with $n$ states and 2 input
letters whose reset threshold is $(n-1)^2$. The automaton
$\mathrsfs{C}_4$ is shown in Fig.~\ref{fig:4rcp}(left). Soon after
that he conjectured that those automata represent the worst possible
case, thus formulating the following hypothesis:
\begin{conjecture*}[\v{C}ern\'y]
\label{Cerny_Conj} Each synchronizing automaton $\mathrsfs{A}$ with
$n$ states has a reset word of length at most $(n-1)^2$, i.e.\
$\rt(\mathrsfs{A}) \le (n-1)^2$.
\end{conjecture*}
By now, this simply looking conjecture is arguably the most
longstanding open problem in the combinatorial theory of finite
automata. Moreover, the best upper bound known so far for the reset threshold of a synchronizing $n$-state automaton is equal to
$\frac{n^3-n}6-1$ (for $n\ge 4$) and so is cubic in $n$ (see
Pin~\cite{Pin1983}). Thus it is of certain importance to prove better specific upper bounds for various important classes of synchronizing automata.

In this paper, we improve several results concerning reset thresholds.
First, we express the condition that an automaton is synchronizing in terms of linear algebra, and derive upper bounds
for automata with a word of a small rank (Section~\ref{sec:criterion}).
Then, we apply the results to improve upper bounds in several cases.

We apply the results to improve upper bounds in several cases.
In Section~\ref{sec:cerny_random} we
show that the \v{C}ern\'{y} conjecture holds for automata with a
letter of rank $\sqrt[3]{6n-6}$, which improves the previous
logarithmic bound \cite{Pin1972Utilisation}. Also, basing on the
recent results of Nicaud
\cite{Nicaud2014FastSynchronizationOfRandomAutomata}, we show that the \v{C}ern\'{y} conjecture holds for a random synchronizing binary automaton with probability
exponentially (in $n$) close to 1, and that the expected reset threshold is
at most $n^{3/2 + o(1)}$.

The next important application of our results is an upper bound for
the length of the shortest reset words of decoders of finite prefix codes
(Huffman codes), which are one of the most popular methods of data compression.
One of the problems with compressed data is the reliability in case of presence of errors in the compressed text. Eventually, a single error may possibly destroy the whole encoded string. One of the solutions to this problem (for Huffman codes) is a use of codes whose decoders can be synchronized by a reset word, regardless of the possible errors. Then, by inserting reset words into compressed data, we make the data error-resistant to some extent.

The reset thresholds of binary Huffman codes was first studied by Biskup and Plandowski \cite{Biskup2008HuffmanCodes,BiskupPlandowski2009HuffmanCodes}, who proved a general upper bound of order $O(n^2 \log n)$.
They also showed that a word of this length can be computed in polynomial time.
The bound was later improved to $O(n^2)$ for a wider
class of \emph{one-cluster} automata \cite{BBP2011QuadraticUpperBoundInOneCluster}.
In Section~\ref{sec:prefix_codes} we prove an upper bound of order $O(n \log^3 n)$.
Next, we consider random decoders, and show that the expected reset threshold of a uniformly random synchronizing $n$-state binary decoder is at most $O(n \log n)$.
We also show a series of decoders with linear reset thresholds over any alphabet of size at least 3. Such series were known only for a binary alphabet \cite{BiskupPlandowski2009HuffmanCodes}.

Unlike the general case, the \v{C}ern\'{y} conjecture has been
approved for various classes of automata such as
circular~\cite{Dubuc1998}, Eulerian~\cite{Kari2003Eulerian}, and
one-cluster automata with prime length
cycle~\cite{Steinberg2011OneClusterPrime}. Later, specific quadratic
upper bounds for some generalizations of these classes were obtained
in~\cite{BBP2011QuadraticUpperBoundInOneCluster,BP2009QuadraticUpperBoundInOneCluster,Berlinkov2013QuasiEulerianOneCluster}.
However, no efficient algorithm for finding reset words with lengths
within the specified bounds has been presented for these classes.
Moreover, there is no hope to get a polynomial algorithm for finding
the shortest reset words in the general case, since this problem has
been shown to be $\mathrm{FP}^{\mathrm{NP}[\log]}$-hard
\cite{OM2010}. Also, unless $\mathrm{P}=\mathrm{NP}$, there is no
polynomial algorithm for computing the reset threshold for a given
automaton within the approximation ratio $n^{1-\varepsilon}$ for any $\varepsilon>0$, even in the case of a binary
alphabet~\cite{GawrychowskiStraszak2015StrongInapproximabilityOfTheShortestResetWord}
(cf.~also
\cite{Berlinkov2014Approximating,Berlinkov2014OnTwoAlgorithmicProblems,GH2011}).

In Section~\ref{sec:algorithms} we present polynomial algorithms for
finding reset words of length guaranteed to be within the proven bounds. In particular, our algorithms can be applied to the classes of decoders of finite prefix codes, and also to generalized classes of quasi-Eulerian and quasi-one-cluster automata. Since from our results it is possible to derive the bounds
from~\cite{BBP2011QuadraticUpperBoundInOneCluster,BP2009QuadraticUpperBoundInOneCluster,Berlinkov2013QuasiEulerianOneCluster,CarpiDAlessandro2013IndependendSetsOfWords,Kari2003Eulerian,Steinberg2011AveragingTrick,Steinberg2011OneClusterPrime}), our algorithms apply to them as well.

A preliminary version of some of these results previously appeared in~\cite{BS2015AlgebraicSynchronizationCriterion}.


\section{Algebraic synchronization criterion}
\label{sec:criterion}

In this section we refine some results
from~\cite{Berlinkov2013QuasiEulerianOneCluster}, formulate the algebraic synchronization criterion, and derive upper bounds for reset thresholds of automata with a word of a small rank.
For this purpose, we associate a natural linear structure with an automaton
$\mathrsfs{A}$. By $\mathbb{R}^n$ we denote the real $n$-dimensional
linear space of row vectors. Without loss of generality, we assume
that $Q=\{1,2,\dots,n\}$ and then assign to each subset $K\subseteq
Q$ its \emph{characteristic vector} $[K] \in \mathbb{R}^n$, whose
$i$-th entry is 1 if $i \in K$, and 0, otherwise. For $q\in Q$ we
write $[q]$ instead of $[\{q\}]$ to simplify the notation. By
$\langle S\rangle$ we denote the linear span of
$S\subseteq\mathbb{R}^n$. The $n \times n$ identity matrix is
denoted by $I_n$.

Each word $w \in \Sigma^*$ corresponds to a linear transformation of
$\mathbb{R}^n$. By $[w]$ we denote the matrix of this transformation
in the standard basis $[1],\ldots,[n]$ of $\mathbb{R}^n$. For
instance, if $\mathrsfs{A}=\mathrsfs{C}_4$ from
Figure~\ref{fig:4rcp} (left), then
$$[a]=\left(
  \begin{smallmatrix}
    0 & 1 & 0 & 0\\
    0 & 0 & 1 & 0\\
    0 & 0 & 0 & 1\\
    1 & 0 & 0 & 0
  \end{smallmatrix}
\right),\ [b]=\left(
  \begin{smallmatrix}
    1 & 0 & 0 & 0\\
    0 & 1 & 0 & 0\\
    0 & 0 & 1 & 0\\
    1 & 0 & 0 & 0
  \end{smallmatrix}
\right),\ [ba]=\left(
  \begin{smallmatrix}
    0 & 1 & 0 & 0\\
    0 & 0 & 1 & 0\\
    0 & 0 & 0 & 1\\
    0 & 1 & 0 & 0
  \end{smallmatrix}
\right).$$
Clearly, the matrix $[w]$ has exactly one non-zero entry
in each row. In particular, $[w]$ is \emph{row stochastic}, that is,
the sum of entries in each row is equal to $1$. In virtue of
row-vector notation (apart
from~\cite{Berlinkov2013QuasiEulerianOneCluster}), we get that
$[uv]=[u][v]$ for every two words $u,v \in \Sigma^{*}$. By $[w]^T$
we denote the transpose of the matrix $[w]$. One easily verifies
that $[S.w^{-1}]=[S][w]^T$. Let us also notice that within this
definition the (adjacency) matrix of the underlying digraph of
$\mathrsfs{A}$ is equal to $\sum_{a \in \Sigma}[a]$.

Recall that a word $w$ is a reset word if $q.w^{-1}=Q$, for some
state $q \in Q$. Thus, in the language of linear algebra, we can
rewrite this fact as $[q][w]^T = [Q]$. For two vectors $g_1,g_2 \in
\mathbb{R}^n$, we denote their usual inner (scalar) product by
$(g_1,g_2)$. We say that a vector (matrix) is \emph{positive}
(\emph{non-negative}) if it contains only positive (non-negative)
entries. Let $p \in \mathbb{R}^n_+$ be a positive row stochastic
vector. Then $([Q],p)=1$, and a word $w$ is a reset word if and only
if $$([q.w^{-1}],p) = ([q][w]^T, p) = ([q], p [w]) = 1.$$

Now we need to recall a few properties of Markov chains. A
\emph{Markov chain} of an automaton $\mathrsfs{A}$ is the random
walk process of an agent on the underlying digraph of $\mathrsfs{A}$
where each time an edge labeled by $a_i$ is chosen according to a
given probability distribution $P\colon \Sigma \mapsto R$. The
matrix $S(\mathrsfs{A},P)=\sum_{i=1}^{k}{P(a_i)[a_i]}$ is called the
\emph{transition matrix} of this Markov chain. An example of a
Markov chain associated with the automaton
$\mathrsfs{A}=\mathrsfs{C}_4$ is presented in Figure~\ref{fig:4rcp}
(right) for $P(a) = 0.7, P(b) = 0.3$. A non-negative square matrix
$M$ is \emph{primitive} if for some $d>0$, the matrix $M^d$ is
positive. Call a finite set of words $W$ \emph{primitive} if the
sum of the matrices of words from $W$ is primitive. It is well
known that if $\mathrsfs{A}$ is strongly connected and
synchronizing, then the matrix of the underlying digraph of
$\mathrsfs{A}$ is primitive, and so is the matrix of a Markov chain
of $\mathrsfs{A}$ for any positive probability distribution $P$ (see
e.g.~\cite{AGV2010,AGV2013,Berlinkov2013QuasiEulerianOneCluster}).

\begin{proposition}\label{prop:markov}
Let $M$ be a row stochastic $n \times n$ matrix. Then there exists a
\emph{stationary distribution} $\alpha \in\mathbb{R}^{n}$, that is,
a non-negative stochastic vector satisfying $\alpha M = \alpha$.
Moreover, if $M$ is primitive then $\alpha$ is unique and positive.
\end{proposition}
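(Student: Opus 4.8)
The plan is to establish this as a standard consequence of the Perron--Frobenius theory together with a compactness/fixed-point argument for the general (non-primitive) case. For the existence part, I would consider the map $x \mapsto xM$ acting on the standard simplex $\Delta = \{x \in \mathbb{R}^n : x \ge 0,\ ([Q],x) = 1\}$. Since $M$ is row stochastic, $([Q], xM) = (x, M[Q]^T) = (x,[Q]) = 1$ because every row of $M$ sums to $1$, so $M[Q]^T = [Q]^T$; and $xM \ge 0$ whenever $x \ge 0$ since $M$ is non-negative. Hence the continuous linear map $x \mapsto xM$ sends the nonempty, compact, convex set $\Delta$ into itself, and Brouwer's fixed-point theorem yields a vector $\alpha \in \Delta$ with $\alpha M = \alpha$, which is exactly a non-negative stochastic stationary distribution. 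Alternatively, and perhaps more self-containedly, one can argue purely via linear algebra: $1$ is an eigenvalue of $M^T$ (with eigenvector $[Q]^T$), hence of $M$, so there is a nonzero left eigenvector $\beta$ with $\beta M = \beta$; the nontrivial point is then to show one can choose it non-negative, for which the fixed-point argument is the cleanest route, so I would favour the Brouwer approach.

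For the ``moreover'' part, assume $M$ is primitive, so $M^d > 0$ for some $d > 0$. Uniqueness and positivity then follow from the Perron--Frobenius theorem applied to $M^T$ (or to $M$ acting on the right on column vectors): a primitive non-negative matrix has a simple largest eigenvalue whose eigenvector can be taken strictly positive, and all other eigenvalues are strictly smaller in modulus. Since $M$ is row stochastic its spectral radius is $1$ (the row sums being $1$ force $\|M\|_\infty = 1$, and $1$ is attained as an eigenvalue by the previous paragraph), so $1$ is the Perron eigenvalue, it is simple, and the corresponding left eigenspace is one-dimensional and spanned by a strictly positive vector. Normalizing that vector to be stochastic gives the unique $\alpha$, and primitivity forces $\alpha > 0$. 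If one wants to avoid quoting Perron--Frobenius wholesale, positivity can be seen directly: from $\alpha = \alpha M = \alpha M^d$ and $M^d > 0$, any nonzero non-negative $\alpha$ satisfies $\alpha M^d > 0$ entrywise, hence $\alpha > 0$; uniqueness then reduces to showing the $1$-eigenspace of $M$ is one-dimensional, which follows because two positive stationary vectors, suitably scaled, would have a difference that is a stationary vector with a zero coordinate, contradicting the positivity just shown unless it is zero.

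The main obstacle is really just the existence of a \emph{non-negative} stationary vector, since the bare existence of \emph{some} left $1$-eigenvector is immediate from $M[Q]^T = [Q]^T$; handling the sign constraint in the possibly-reducible case is where a genuine argument (Brouwer, or an averaging/Cesàro-limit argument using that the partial averages $\frac1N\sum_{t=0}^{N-1} M^t$ are row stochastic and have a convergent subsequence whose limit $\Pi$ satisfies $\Pi M = \Pi$, so any row of $\Pi$ works) is needed. I expect the cleanest writeup to use the Cesàro/averaging argument for existence, since it stays within elementary linear algebra and does not invoke a topological fixed-point theorem, and then to invoke the primitive case of Perron--Frobenius (which is exactly the setting where it is most elementary, via $M^d > 0$) for uniqueness and strict positivity.
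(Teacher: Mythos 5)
The paper does not prove this proposition at all---it is stated as a standard fact from Perron--Frobenius / Markov-chain theory (compare the reference to ``well known'' properties of Markov chains in the surrounding text), so there is no in-paper argument to compare against. Your proof is mathematically correct and is the textbook argument: for existence, either the Brouwer fixed-point route (the simplex is invariant under $x\mapsto xM$) or the Ces\`aro averaging route (a subsequential limit $\Pi$ of $\frac1N\sum_{t<N}M^t$ satisfies $\Pi M=\Pi$ and is row stochastic) works, and you correctly identify that the nontrivial content is the sign constraint, not the mere existence of a left $1$-eigenvector. For the primitive case, the observation that $\alpha=\alpha M^d$ together with $M^d>0$ forces every non-negative nonzero stationary $\alpha$ to be strictly positive is exactly right, and the subtraction argument (scale one positive stationary vector against another so the difference is non-negative with a zero entry, then conclude it must vanish) gives uniqueness of the stochastic stationary vector. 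One small imprecision: you phrase the uniqueness step as showing the whole $1$-eigenspace is one-dimensional, but the argument as written only directly shows that any two \emph{non-negative} stationary vectors are proportional; that is all the proposition asserts, so this is harmless, though if you really wanted the stronger eigenspace statement you would add a short perturbation argument (for any stationary $v$ independent of $\alpha$, both $\alpha\pm\varepsilon v$ are positive stationary vectors for small $\varepsilon$, and after normalizing they give two distinct stochastic ones). In short: the proof is sound and standard; just tighten the wording of the uniqueness claim.
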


Call a set of words $W \subseteq \Sigma^*$ \emph{complete} for a
subspace $V \leq \mathbb{R}^n$, with respect to a vector $g \in V$,
if
$$\langle g [w] \mid w \in W \rangle=V.$$
For a subset $S \subseteq Q$ we define $V_S = \langle[p] \mid
p \in S\rangle \leq \mathbb{R}^n$.

We aim to
strengthen~\cite[Theorem~9]{Berlinkov2013QuasiEulerianOneCluster}.
Namely, we show that the condition that $\mathrsfs{A}$ is
synchronizing is not necessary if we require completeness for the
corresponding set of words, and that only completeness with respect
to the stationary distribution of $\mathrsfs{A}$ is required. As
in~\cite{Berlinkov2013QuasiEulerianOneCluster} we construct an
auxiliary automaton. We fix two positive integers $d_1, d_2$ and two
non-empty sets of words $W_1 \subseteq \Sigma^{\le d_1}$, $W_2 \subseteq \Sigma^{\le d_2}$. Consider the automaton
$$\mathrsfs{A}_c(W_1,W_2)=(R, W_2 W_1, \delta_{\mathrsfs{A}_c}),$$
where $R = \{q.w \mid q \in Q,\ w \in W_1 \}$ and $W_2 W_1 = \{w_2
w_1 \in \Sigma^* \mid w_2 \in W_2, w_1 \in W_1\}$. The transition
function $\delta_{\mathrsfs{A}_c}$ is defined in compliance with the
actions of words in $\mathrsfs{A}$, i.e.
$\delta_{\mathrsfs{A}_c}(q,w) = \delta(q,w)$, for all $q \in R$ and
$w \in W_2 W_1$. Note that $\delta_{\mathrsfs{A}_c}$ is well defined
because $q.w \in R$ for all $q \in Q$ and $w \in
\Sigma_{\mathrsfs{A}_c}$. Without loss of generality we may assume
that $R=\{1,2,\ldots,|R|\}$ where $r = |R|$.

Let $P_1$ and $P_2$ be some positive probability distributions on
the sets $W_1$ and $W_2$, respectively, and denote $[P_i] = \sum_{w
\in W_i}{P_i(w)[w]}$ for $i=1,2$. Then the $r \times r$ submatrix
formed by the first $r$ rows and the first $r$ columns of the matrix
$$S(\mathrsfs{B},P_2P_1)=[P_2][P_1] = \sum_{w_1\in W_1, w_2 \in W_2}{P_1(w_1)P_2(w_2)[w_2][w_1]}$$
is the transition matrix of the Markov chain on ${\mathrsfs{A}_c}$.
By Proposition~\ref{prop:markov} there exists a steady state
distribution $\alpha = \alpha({\mathrsfs{A}_c}) \in V_R$, that is, a
stochastic vector (with first $r$ non-negative entries) satisfying
$\alpha S({\mathrsfs{A}_c},P_2P_1) = \alpha$.

For a vector $g \in \mathbb{R}_+^n$, by $\DS(g)$ we denote the number
of different positive sums of entries of $g$, i.e.
$$\DS(g) = |\{ (g,z) \mid z \in \{0,1\}^n \}| - 1.$$

\begin{theorem}\label{thm:main_ext}
Let $\mathrsfs{A} = (Q,\Sigma,\delta)$ be an automaton and let
$$\mathrsfs{B}=\mathrsfs{A}_c(W_1,W_2) = (R, W_2 W_1, \delta_\mathrsfs{B}),$$
be the automaton defined as above. If $W_2 W_1$ is complete for $V_R$
with respect to $\alpha$, and $w_0 \in \Sigma^*$ is a word with
$Q.w_0 = R$, then:
\begin{enumerate}
  \item \label{i_2} If $x \in V_R \setminus \langle[R]\rangle$, then there exists $w \in W_2 W_1$ such that $(x,\alpha[w]) > (x,\alpha)$;
  \item \label{i_1} $\mathrsfs{B}$ is synchronizing and $\rt(\mathrsfs{B}) \leq \DS(\alpha)-1$;
  \item \label{i_3} $\mathrsfs{A}$ is synchronizing and
  $$\rt(\mathrsfs{A}) \leq \begin{cases}
|w_0| + \rt(\mathrsfs{B})(d_1 + d_2) \leq |w_0| + (\DS(\alpha) - 1)(d_1 + d_2) & \text{if $R \neq Q$,}\\
1 + (\DS(\alpha) - 2)(d_1 + d_2) & \text{if $R = Q$.}\\
\end{cases}$$
\end{enumerate}
\end{theorem}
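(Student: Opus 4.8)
My plan is to prove the three parts in the order they are stated, since each feeds into the next. For part~\ref{i_2}, the key idea is to exploit that $\alpha$ is a stationary distribution of the Markov chain on $\mathrsfs{B}$: writing $\alpha = \sum_{w \in W_2 W_1} P(w)\, \alpha[w]$ for the induced distribution $P = P_2 P_1$ on $W_2 W_1$, I get $(x,\alpha) = \sum_{w} P(w)\,(x,\alpha[w])$, so $(x,\alpha)$ is a convex combination of the values $(x,\alpha[w])$. Hence either all of them equal $(x,\alpha)$, or at least one strictly exceeds it. To rule out the first case, I would use completeness: if $(x,\alpha[w]) = (x,\alpha)$ for all $w \in W_2 W_1$, then by iterating (using $[w_2 w_1][w_2' w_1'] = [w_2 w_1 w_2' w_1']$ — here I need that products of words from $W_2 W_1$ act correctly, which follows from the well-definedness of $\delta_{\mathrsfs{B}}$ noted before the theorem) one shows $(x, \alpha[u]) = (x,\alpha)$ for every $u$ in the submonoid generated by $W_2 W_1$, and in particular for a spanning set of $V_R$; so $x$ would be orthogonal to all of $V_R$ after subtracting the component along $[R]$, forcing $x \in \langle [R]\rangle$, contradiction. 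The one subtlety is that $\alpha[w]$ need not lie in $V_R$ a priori, but since the first-$r$-rows-and-columns submatrix governs the chain and $\alpha \in V_R$, the relevant inner products only see the $V_R$-part; I would phrase everything in terms of the restricted action on $V_R$ to keep this clean.

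For part~\ref{i_1}, I would build a reset word for $\mathrsfs{B}$ greedily. Start from the full set $R$, i.e.\ from the vector $[R]$. As long as the current image $S \subsetneq R$ is not a singleton, the characteristic vector $x = [S]$ lies in $V_R$ (it is a sum of standard basis vectors indexed by $R$) but not in $\langle[R]\rangle$ (since $[S]$ is parallel to $[R]$ only when $S = R$, excluded, or $S = \emptyset$); wait — I should instead track $x = [S]$ and apply part~\ref{i_2} to get $w \in W_2 W_1$ with $([S], \alpha[w]) > ([S], \alpha)$. Reinterpreting via $([S],\alpha[w]) = ([S][w]^T, \alpha) = ([S.w^{-1}], \alpha)$, this says the weighted size $([S.w^{-1}],\alpha)$ of the preimage strictly increased. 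Since $\alpha$ takes only finitely many distinct partial sums — exactly $\DS(\alpha)$ positive ones plus $0$ — this strictly increasing quantity, bounded above by $([R],\alpha) = 1$, can increase at most $\DS(\alpha)-1$ times before it must stabilize, and stabilization forces the relevant set to be all of $R$, i.e.\ we have synchronized. Concatenating these at most $\DS(\alpha)-1$ words from $W_2 W_1$ gives a reset word of $\mathrsfs{B}$ (read in the preimage direction, so the reset word is the reverse concatenation), yielding $\rt(\mathrsfs{B}) \le \DS(\alpha) - 1$. I should double-check the off-by-one: the first step goes from value $([R],\alpha)$ already at the maximum $1$... so actually one works with preimages of singletons growing from $([q],\alpha)$ up to $1$; I would set up the induction on the set $T$ with $T.w = R$ being built, starting from a singleton, so the number of strict increases of $([T],\alpha)$ is at most $\DS(\alpha) - 1$, matching the claim; the endpoint $T = R$ is reached exactly when the value hits $1$.

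For part~\ref{i_3}, I transfer the reset word of $\mathrsfs{B}$ back to $\mathrsfs{A}$. A reset word of $\mathrsfs{B}$ is a word over the alphabet $W_2 W_1$ of length $\le \rt(\mathrsfs{B})$, and each such letter is a word over $\Sigma$ of length $\le d_1 + d_2$; substituting gives a word over $\Sigma$ of length $\le \rt(\mathrsfs{B})(d_1 + d_2)$ that maps $R$ to a single state. If $R \ne Q$, prepending $w_0$ (with $Q.w_0 = R$) gives a reset word of $\mathrsfs{A}$ of length $\le |w_0| + \rt(\mathrsfs{B})(d_1+d_2) \le |w_0| + (\DS(\alpha)-1)(d_1+d_2)$, establishing both that $\mathrsfs{A}$ is synchronizing and the bound. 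If $R = Q$, no $w_0$ is needed, but one can save one step: the first compression from $Q$ can be achieved by any single letter $a \in \Sigma$ that is not a permutation (such $a$ exists since $\mathrsfs{A}$, being synchronizing, is not a group), so we spend $1$ letter to reach a proper subset and then apply $\DS(\alpha) - 2$ words of $W_2 W_1$, giving $1 + (\DS(\alpha)-2)(d_1+d_2)$; here I must argue that after the first letter the remaining greedy process needs only $\DS(\alpha)-2$ steps because the value $([T],\alpha)$ now starts strictly below $1$ at a value already attained, leaving at most $\DS(\alpha)-2$ further strict increases. The main obstacle I anticipate is precisely this bookkeeping of the off-by-one constants and making the "preimage size strictly increases" argument rigorous when $\alpha[w]$ is not literally supported on $R$ — I would handle it by working throughout with the induced stochastic matrix on $V_R$ (the $r\times r$ corner), for which all images stay in $V_R$ and $\alpha$ is genuinely stationary, and only at the end translate back to statements about $\mathrsfs{A}$.
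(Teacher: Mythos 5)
Your proposal is correct in substance and follows the same strategy as the paper: for Claim~\ref{i_2} the key point is that $(x,\alpha)$ is a $P_2P_1$-convex combination of the $(x,\alpha[w])$ and that completeness rules out all of them being equal; for Claims~\ref{i_1} and~\ref{i_3} the paper likewise defers to the greedy extension algorithm, which tracks the $\alpha$-measure of the preimage of a singleton and uses that it can strictly increase at most $\DS(\alpha)-1$ times, with the same single-letter shortcut when $R=Q$. Your anticipation of the subtlety that one should work with the restriction to $V_R$ (so that $\alpha[w]$ stays in $V_R$ and the partial-sum count applies) is exactly the clean way to set this up.

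One small flaw worth flagging: in Claim~\ref{i_2} you invoke an ``iterating to the submonoid generated by $W_2W_1$'' step. That step does not go through as you describe it — from $(x,\alpha[w])=(x,\alpha)$ for all $w\in W_2W_1$ you cannot deduce $(x,\alpha[ww'])=(x,\alpha)$, since $\alpha[w]$ is a different vector than $\alpha$ and the hypothesis is only about $\alpha$. Fortunately you do not need it: the completeness hypothesis already says that $\{\alpha[w] \mid w \in W_2W_1\}$ spans $V_R$, so the conclusion you want (that $x-(x,\alpha)[R]$ is orthogonal to $V_R$, hence zero, using $([R],\alpha[w])=1$ since $\alpha[w]$ is stochastic and supported on $R$) follows directly from $W_2W_1$ alone. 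Drop the submonoid remark and the argument is sound. The visible self-correction in Claim~\ref{i_1} (starting from $[R]$ versus growing a preimage from a singleton) and the resulting final version are fine; only the preimage-growing version gives the clean $\DS(\alpha)-1$ count, as you concluded.
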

\begin{proof}
Let $x \in V_R \setminus \langle[R]\rangle$. We have
\begin{equation}
\label{eq_xq} (x,[q]) \neq (x,\alpha) \text{ for some } q \in R.
\end{equation}

Since $[q] \in V_R$ and $W_2 W_1$ is complete for $V_R$ with respect
to $\alpha$, we can represent it as follows:
\begin{equation}\label{eq:represent_g}
[q]=\sum_{w_1 \in W_1, w_2 \in W_2}{\lambda_{w_1,w_2}
\alpha[w_2][w_1]}\text{ for some } \lambda_{w_1,w_2} \in \mathbb{R}.
\end{equation}

Multiplying (\ref{eq:represent_g}) by the vector $[Q]$ we obtain
\begin{equation}\label{eq_mult_by_Q}
1=([q],[Q])=\left(\sum_{w_1 \in W_1, w_2 \in W_2}{\lambda_{w_1,w_2}
\alpha[w_2][w_1]},[Q]\right)= \sum_{w_1 \in W_1, w_2 \in
W_2}{\lambda_{w_1,w_2}}.
\end{equation}

Multiplying (\ref{eq:represent_g}) by the vector $x$ we obtain
\begin{equation}
\label{eq_mult_by_X}([q],x)=\left(\sum_{w_1 \in W_1, w_2 \in
W_2}{\lambda_{w_1,w_2} \alpha[w_2][w_1]},x\right).
\end{equation}

Arguing by contradiction, suppose $(x, \alpha[u_2][u_1]) =
(x,\alpha)$ for every $u_1 \in W_1$, $u_2 \in W_2$. Then by
(\ref{eq_mult_by_Q}) and~(\ref{eq_mult_by_X}) we get that
$([q],x)=(x,\alpha)$ contradicts (\ref{eq_xq}). Hence
$$(x, \alpha[u_2][u_1]) \neq (x,\alpha),$$
for some $u_1 \in W_1$, $u_2 \in W_2$.

Since $\alpha[P_2][P_1] = \alpha$, we have either
$(x,\alpha[u_2][u_1])>(x, \alpha)$ or
$(x,\alpha[v_2][v_1])>(x,\alpha)$ for some other $v_1 \in W_1, v_2
\in W_2$. Thus Claim~\ref{i_2} follows.

The proof of Claims~\ref{i_1} and~\ref{i_3} follows from an application of the \emph{greedy extension algorithm} from Section~\ref{sec:algorithms}.
\qed
\end{proof}

The following properties are easily verified and will be useful:

\begin{remark}
If $W_2$ is complete for $\mathbb{R}^n$ with respect to some vector $g$, then $W_2 W_1$ is complete for $V_R$ with respect to $g$.
\end{remark}

\begin{remark}
If for some positive probability distributions on $W_2$ and $W_1$, the set $W_2 W_1$ is complete for $V_R$ with respect to each stationary distribution, then $\mathrsfs{B}=\mathrsfs{A}_c(W_1,W_2)$ is strongly connected and synchronizing.
\end{remark}

\begin{remark}
If $\mathrsfs{B}=\mathrsfs{A}_c(W_1,W_2)$ is strongly connected and $W_2 W_1$ is complete for $V_R$ with respect to a stationary distribution induced by some positive probability distributions on $W_2$ and $W_1$, then $W_2 W_1$ is complete for $V_R$ with respect to any stochastic vector.
\end{remark}

\begin{criterion}\label{crit:syn}
Let $\alpha$ be a stationary distribution of the Markov chain
associated with a strongly connected $n$-state automaton
$\mathrsfs{A}$ by a given positive probability distribution $P$ on
the alphabet $\Sigma$. Then $\mathrsfs{A}$ is synchronizing if and
only if there exists a set of words $W$ which is complete for
$\mathbb{R}^n$ with respect to $\alpha$.
\end{criterion}
\begin{proof}
If $\mathrsfs{A}$ is synchronizing then for each state $q \in Q$
there is a reset word $w_q$ such that $Q.w_q = q$. Hence, $W = \{w_q
\mid q\in Q\}$ is complete for $\mathbb{R}^n$ with respect to
$\alpha$, because $\alpha [w_q] = [q]$.

Let us prove the opposite direction. Set
$$W_1 = \{\varepsilon\},\ W_2 = \Sigma^{\leq n-1}, \text{ and } [P_2] =
\frac{1}{n}\sum_{i=0}^{n-1}[P]^{i}.$$
Then $\alpha [P_2] = \alpha$,
and $W_2$ is complete for $\mathbb{R}^n$ with respect to $\alpha$.
Hence $\mathrsfs{A}$ is synchronizing by Theorem~\ref{thm:main_ext}.
\qed
\end{proof}

It is worth mentioning that an equivalent criterion with respect to our case has been independently obtained in~\cite{VP2015} in terms of affine operators via a so called \emph{fixed point} approach.

Now we can provide an upper bound for the reset threshold, if we can find a short word of a small rank.

\begin{theorem}\label{thm:about_threshold}
Let $\mathrsfs{A} = (Q,\Sigma,\delta)$ be a synchronizing automaton.
Then there is a unique (strongly connected) \emph{sink component} $\mathrsfs{S} = (S,\Sigma,\delta)$.
Let $w$ be a word and denote $r = |Q.w|$. Let $0 < d < n$ be the smallest positive integer such that $\Sigma^{\leq d}$ is complete for $V_S$ with respect to any stochastic vector $g \in V_S$ and
for each $q \in Q$ there is a word $u_q \in \Sigma^{\leq d}$ such
that $q.u_q \in S \cap Q.w$. Then
$$\rt(\mathrsfs{A}) \leq \begin{cases}
(|w| + d)(\frac{r^3 - r}{6})-d & \text{if $r \ge 4$;}\\
|w|+(|w|+d)(r-1)^2 & \text{if $r \le 3$.}\\
\end{cases}$$
Moreover, any pair of states from $Q$ is compressible by a word of
length at most $|w|+(|w|+d)\frac{r^2-r}{2}$.
\end{theorem}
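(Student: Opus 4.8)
The plan has three parts: locate the sink component, reduce both bounds to a ``peeling'' estimate for the rank of a reachable image, and establish that estimate by iterating Theorem~\ref{thm:main_ext}.

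\emph{Sink component.} In the condensation of the underlying digraph of $\mathrsfs{A}$, each terminal strongly connected component $S'$ is invariant: $S'.a\subseteq S'$ for all $a\in\Sigma$. If there were two distinct terminal components $S_1,S_2$, then for every word $v$ the sets $S_1.v\subseteq S_1$ and $S_2.v\subseteq S_2$ would be non-empty and disjoint, so $|Q.v|\ge2$, contradicting that $\mathrsfs{A}$ is synchronizing. Hence the terminal component $S$ is unique; being invariant, it carries a well-defined automaton $\mathrsfs{S}=(S,\Sigma,\delta)$, which is strongly connected by construction and synchronizing because every reset word $v$ of $\mathrsfs{A}$ has $|S.v|\le|Q.v|=1$.

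\emph{Reduction to a peeling estimate.} Put $D=|w|+d$ and $R_0=S\cap Q.w$ (non-empty since $q.u_q\in R_0$ for every $q$). I would prove: (a) every pair of states of $Q$ is compressed by a word of length at most $|w|+D\binom{r}{2}$; and (b) every reachable image $T$ of size $i$ with $2\le i\le r$ is mapped to a smaller set by a word that is a product of at most $\binom{i}{2}$ ``compound letters'' each of length at most $D$. Given these, $\mathrsfs{A}$ is synchronized by applying $w$ itself first (cost $|w|$, reaching $Q.w$ of size $r$) and then peeling the rank from $r$ down to $1$; charging phase $i$ at most $\binom{i}{2}$ compound letters, with the word $w$ already counted as the first letter of phase~$r$, the total length is at most $|w|+D\bigl(\sum_{i=2}^{r}\binom{i}{2}-1\bigr)=|w|+D\bigl(\binom{r+1}{3}-1\bigr)=(|w|+d)\tfrac{r^{3}-r}{6}-d$, which is the bound for $r\ge4$. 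For $r\le3$ the cubic count yields no gain and the first-letter saving is not available, so one peels instead by a direct \v{C}ern\'{y}-type argument at cost $(r-1)^{2}$ per rank-unit while keeping the leading $|w|$, giving $|w|+(|w|+d)(r-1)^{2}$. The ``moreover'' clause is exactly (a).

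\emph{The peeling estimate via Theorem~\ref{thm:main_ext}.} For a reachable image $T$, the idea is to apply Theorem~\ref{thm:main_ext} to an auxiliary automaton $\mathrsfs{B}=\mathrsfs{A}_c(W_1,W_2)$ built from short words: $W_1$ is chosen so that the state set $R=\{q.v\mid q\in Q,\ v\in W_1\}$ lies in the invariant set $S$ and has size at most $r$ (using the words $u_q$ together with $w$ to funnel $Q$ into $S$, noting that states never leave $S$ and that no map increases the image size beyond $|Q.w|=r$), and $W_2=\Sigma^{\le d}$, so the compound letters have length at most $|w|+d=D$; the hypothesis that $\Sigma^{\le d}$ is complete for $V_S$ with respect to every stochastic vector, together with the remarks following Theorem~\ref{thm:main_ext}, then makes $W_2W_1$ complete for $V_R$ with respect to the stationary distribution $\alpha$. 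Theorem~\ref{thm:main_ext} supplies words decreasing the rank, and since $|T.w_2w_1|\le|T|$ for every compound letter the image size never grows, which lets the peeling run phase by phase. The one non-routine ingredient is the bound ``at most $\binom{i}{2}$ compound letters per rank-unit'': this sharpens the generic estimate $\rt(\mathrsfs{B})\le\DS(\alpha)-1$ of Theorem~\ref{thm:main_ext} by re-running its greedy-extension proof with a progress measure indexed by the $\binom{i}{2}$ pairs of the current size-$i$ image, each pair being ``resolved'' only once, in the spirit of the classical level-by-level argument behind Pin's cubic bound. Obtaining this pair-indexed count---rather than the exponential $\DS(\alpha)$, or the roughly threefold worse bound one would get by compressing a single pair per phase at the fixed cost $\binom{r}{2}$ without sharing work across phases---together with arranging the auxiliary automaton so that all compound letters keep length $\le D$, is the step I expect to be the main obstacle; the sink-component argument, the construction of $\mathrsfs{B}$, and the summation $\sum_{i=2}^{r}\binom{i}{2}=\binom{r+1}{3}$ with its $-d$ correction and small-$r$ case are routine.
\qed
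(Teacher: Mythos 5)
Your identification of the auxiliary automaton $\mathrsfs{B}=\mathrsfs{A}_c(\{w\},\Sigma^{\le d})$ with $r$ states, the role of the unique sink component, and the conversion factor $|w|+d$ per compound letter all match the paper. But the route you propose for the cubic count is not the one the paper takes, and as described it contains a genuine error.

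After establishing that $\mathrsfs{B}$ is synchronizing (via Criterion~\ref{crit:syn} applied to its sink $\mathrsfs{C}=(Q.w\cap S,\,W_2W_1)$, together with the words $u_q$ driving every state of $\mathrsfs{B}$ into $\mathrsfs{C}$), the paper simply cites Pin's general bound $\rt(\mathrsfs{B})\le\frac{r^3-r}{6}-1$ for $r\ge4$, the verified \v{C}ern\'y bound $(r-1)^2$ for $r\le3$, and the pair-automaton bound $\binom{r}{2}$ for pair compression, then applies $\rt(\mathrsfs{A})\le|w|+\rt(\mathrsfs{B})(|w|+d)$. There is no need to ``sharpen the generic estimate $\rt(\mathrsfs{B})\le\DS(\alpha)-1$ by re-running the greedy-extension proof''; indeed Section~\ref{sec:algorithms} explicitly pairs this theorem with the greedy \emph{compression} algorithm (\`a la Eppstein), not the greedy extension algorithm. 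The $\DS(\alpha)$-based argument of Theorem~\ref{thm:main_ext} and Pin's cubic bound are two separate tools, and the step you flag as the ``main obstacle'' is not an obstacle at all --- it is just a citation.

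Moreover, your peeling estimate (b) is stated backwards. A reachable subset of size $i$ in an $r$-state synchronizing automaton is compressible by a word of length at most $\binom{r-i+2}{2}$ (Frankl--Pin), not $\binom{i}{2}$; for $i=2$ the true worst-case bound is $\binom{r}{2}$ steps, so the claim ``$\binom{2}{2}=1$ step'' is false, while for $i=r$ a single letter suffices, not $\binom{r}{2}$. The two sequences happen to sum to the same $\binom{r+1}{3}$, which is why your totals come out right, but the per-phase lemma you set out to prove is false as you wrote it. Finally, the ``$-1$'' does not come from counting $w$ as the first compound letter of phase $r$ --- $w$ is strictly shorter than a compound letter, so that accounting does not work --- it is built into Pin's bound $\frac{r^3-r}{6}-1$ and propagates to $(|w|+d)\frac{r^3-r}{6}-d$ after multiplying through.
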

\begin{proof}
Let $W_1 = \{w\}$, $W_2 = \Sigma^{\leq d}$, $w_0 = w$, and let
$P_1$, $P_2$ be arbitrary positive distributions on $W_1$ and $W_2$,
respectively. We define $\mathrsfs{B} = \mathrsfs{A}_c(W_1,W_2)$ as
in Theorem~\ref{thm:main_ext}, and consider its sink component
$\mathrsfs{C} = \mathrsfs{S}_c(W_1,W_2) = (Q_C, \Sigma, W_2 W_1)$.
Clearly $Q_C = Q.w \cap S$, and $W_2 W_1$ is complete for $V_{Q_C}
\leq V_S$ with respect to any stochastic vector $g \in V_{Q_C}$. By
Criterion~\ref{crit:syn} we obtain that $\mathrsfs{C}$ is
synchronizing.

Since for each $q \in Q.w$ there is a word $u_q \in W_2$ and so $w_q
\in W_2 W_1$ (a letter of $\mathrsfs{B}$) which takes $q$ to $Q_C$,
the automaton $\mathrsfs{B}$ is synchronizing.

Since $\mathrsfs{B}$ is synchronizing, $|Q.w_0| = r$, and $|u| \leq
|w| + d$ for each $u \in W_2 W_1$, we have that $\rt(\mathrsfs{A}) \leq |w| + \rt(\mathrsfs{B})(|w|+d)$.
By Pin's bound for the reset threshold in the general case
\cite{Pin1983}, $\rt(\mathrsfs{B}) \le \frac{r^3-r}{6}-1$ for $r \ge
4$.

Since $\mathrsfs{B}$ is synchronizing and there are
$\frac{r^2-r}{2}$ pairs in $Q.w$, any pair of states in $Q$ can be
compressed by a word of length at most $|w|+(|w|+d)\frac{r^2-r}{2}$.
\qed
\end{proof}


\section{Finding reset words of lengths within the bounds}
\label{sec:algorithms}

Throughout this section suppose we are given a strongly connected
automaton $\mathrsfs{A}$, a word $w_0$ such that $Q.w_0 = R$ for
some $R \subseteq Q$, a non-empty polynomial set of words $W_1$ with a positive distribution $P_1$, and
a set of words $W_2$ with a positive distribution $P_2$, which satisfy Theorem~\ref{thm:main_ext}.

Consider the case when $W_2$ is of polynomial size.
Then we can calculate the dominant eigenvector $\alpha \in \mathbb{R}^n$ of the matrix $[P_2][P_1]$. Under certain assumptions on rationality of the
distributions, it can be done in polynomial time.
Next, depending on whether the bound is obtained by Theorem~\ref{thm:about_threshold} or Claim~\ref{i_1} of Theorem~\ref{thm:main_ext}, we use either a greedy compressing
algorithm (such as in~\cite{Ep1990}), or the following \emph{greedy extension algorithm}, respectively.

\textbf{The Greedy Extension Algorithm}. We start from $x_0 = [q]$
for $q \in R$ and by~Claim~\ref{i_1} of~Theorem~\ref{thm:main_ext}
find $u_0 \in W_2W_1$ such that $(x_0,\alpha[u_0]) > (x_0,\alpha)$.
For $i=0, 1, \ldots $ following this way until $x_i \in
\langle[R]\rangle$, find for $x_{i+1} = x_i [u_i]^t$ a word $u_{i+1}
\in W_2W_1$ such that $(x_{i+1},\alpha [u_{i+1}]) >
(x_{i+1},\alpha)$. Since $x_i$ is a 1-0 vector, we need at most
$\DS(\alpha)-1$ steps until $x_i = [q] ([u_i u_{i-1} \dots u_0])^t = [R]$. As the result we return the word $w_0 u_i u_{i-1} \dots u_0$.
Notice that in the case when $R=Q$ we can choose $q$ such that for some letter $a \in \Sigma$, we have $|q.a^{-1}| > 1$ and set $u_0 = a$.
\qed
\smallskip


The problem is that usually $W_2$ is given by $\Sigma^{\leq d}$ for
some $d = \mathrm{poly}(n)$.
The following reduction procedure allows to replace potentially exponential set $W_2$ with a polynomial set of words $W$, whose the longest words are not longer than those of $W_2$.



\textbf{The Reduction Procedure}.
The procedure takes a number $d$, and returns a polynomial subset $W \subseteq \Sigma^{\leq d}$
such that $\langle W \rangle = \langle \Sigma^{\leq d} \rangle$ and the maximum length of words from $W$ is the shortest possible.

We start with $V_0 = \{ I_n \}$ and $W = \{\varepsilon\}$.
In each iteration $i \in \{1,2,\ldots\}$ we first set $V_{i+1} = V_i$.
Then we subsequently check each letter $a \in \Sigma$ and each word $u \in W$ of length $i$:
If the matrix $[u a]$ does not belong to the subspace $V_{i+1}$, we add the word $u a$ to $W$ and the matrix $[u a]$ to the basis of $V_{i+1}$.
We stop the procedure at the first iteration where nothing is added.

Since in an $i$-th iteration we have considered $a \in \Sigma$ and $u \in W$ of length less than $i$ in the previous iterations, by induction we get
$$V_{i} = \langle I_n (W \cap \Sigma^{\leq i}) \rangle = \langle I_n \Sigma^{\leq i} \rangle.$$
It follows from the ascending chain argument (see e.g.~\cite{Steinberg2011OneClusterPrime,Kari2003Eulerian}) that for
some $j < n$ we have
$$ V_{j} = V_{j+1} = \dots .$$
Thus the procedure is stopped at the first such $j$, and $j \leq \min\{d,n-1\}$.
We get that $\langle W \rangle = V_j = \langle \Sigma^d \rangle$.
Since in each step we add only independent matrices as the basis of $V_{i+1}$, we get $|W| = \dim(V_j)$.
Also the lengths of words in $W$ are at most $j \leq \min\{d,n-1\}$.
\qed
\smallskip

Using the reduction procedure for total completeness we can replace
$\Sigma^d$ from Theorem~\ref{thm:about_threshold} by a polynomial
$W$, which is also complete for $V_S$ with respect to any
stochastic vector $g \in V_S$. Hence, this yields a polynomial time
algorithm finding reset words of lengths within the bound of
Theorem~\ref{thm:about_threshold}.

In some situations we are interested only in completeness with
respect to a given vector $\alpha$. Then we can find a reduced set
$W$ of potentially shorter words than that obtained by the general reduction
procedure.


\textbf{The Reduction Procedure for $\alpha$-Completeness}.
The procedure takes a number $d$ and a vector $\alpha$,
and returns a polynomial subset $W \subseteq \Sigma^{\leq d}$ such that $\langle
\alpha W \rangle = \langle \alpha \Sigma^{\leq d} \rangle$ and the maximum length of words from $W$ is the shortest possible.

We just follow the general reduction procedure, where instead of matrix spaces we consider vector spaces.
It is enough to replace $I_0$ by $\alpha$, and we obtain $\langle \alpha W \rangle = V_j = \langle \alpha \Sigma^{\leq d} \rangle$.
\qed
\smallskip

\begin{remark}
Instead of $\Sigma^{\leq d}$ the reduction procedures can also reduce any set of words $W' \subset \Sigma^*$ that is factor-closed.
A set of words $W'$ is \emph{factor-closed} if $uvw \in W'$ implies that $uw \in W'$, for each $u,v,w \in \Sigma^*$.

This follows since the ascending chain argument still holds.
If $V_i = V_{i+1}$ then also $V_i = \langle W' \rangle$.
Assume for a contrary that $V_i < \langle W' \rangle$, and let $ua \in W'$ be a shortest word such that $I_n [ua]$ is independent to $I_n W$.
Then there is some prefix $w \in W$ of $u$, and $u = wva$.
Since $u$ was a shortest word, $wv$ is dependent to $I_n W$, so $$\langle I_n W \rangle = \langle I_n (W \cup [wv]) \rangle < \langle I_n (W \cup [wva]) = \langle I_n (W \cup [wa]).$$
Since $W'$ is factor-closed, $wa \in W'$, and it was considered in the reduction procedure and added to $W$ -- a contradiction.
\qed
\end{remark}


The following procedure finds a polynomial subset $W \subseteq W_2$ such that $W W_1$ is still primitive under the restriction to $V_R$, and the words in $W$ are as short as possible.


\textbf{The Reduction Procedure for Primitive Sets}.
As the input the procedure takes a set of words $W_1$ and a number $d > 0$ such that $\Sigma^{\leq d}
W_1$ is primitive when restricted to $V_R$ where $R = Q.W_1$, and
returns a polynomial subset $W \subseteq \Sigma^{\leq d}$ such that $W W_1$ is
also primitive for $R$.

We follow the reduction procedure with the following modification:
Instead of adding a word $u a$ to $W$ if $[u a]$ does not belong to the current subspace $V_{i+1}$, we add $u a$ if for some $w_1 \in W_1$ there is a non-zero entry in $[u a][w_1]$ such that this entry is zero in all matrices $[w]$ for $w \in W W_1$.
We stop the procedure as soon as the set of words $W W_1$ restricted to $V_R$ becomes primitive.

To check whether $W W_1$ is primitive, since the exponent of $r \times r$ primitive matrix is at most $(r-1)^2+1$ (see~\cite{AGV2013}), it is enough to check that the $((r-1)^2+1)$-th power of the sum of all matrices $[w]$ for $w \in W W_1$ is positive. Since in each step we make positive at least one entry in this sum, we need at most $(r-1)^2+1$ steps in total.
\qed
\smallskip




Now, given some sets $W_1$ and $W_2 = \Sigma^{\leq d}$, we can first find $W \subseteq W_2$ such that $W W_1$ is primitive for $V_R$. Then, we can choose some positive probability distribution on $W$, which induces a unique stationary distribution $\beta$.
We can also find $W' \subseteq W_2$ complete with respect to $\beta$.
The problem here is that, for the set $(W \cup W') W_1$ there is possibly no positive probability distribution inducing the stationary distribution $\beta$.
In order to apply Theorem~\ref{thm:main_ext}, we need to show that $(W \cup W') W_1$ can be complete with respect to its stationary distribution. The following theorem solves this problem.

\begin{theorem}\label{thm:find_generators}
Let $\mathrsfs{A}$ be a strongly connected automaton,
and the sets $W_1,W_2$ be chosen so that the matrix of the underlying digraph of $\mathrsfs{B} = \mathrsfs{A}_c(W_1,W_2)$ is primitive.
Let $\alpha$ be a stationary distribution of the Markov chain associated with $\mathrsfs{B}$ for arbitrary positive distributions $P_1,P_2$ on
$W_1,W_2$, respectively.
Then, for each set of words $W$ which is complete for $\mathbb{R}^n$ with respect to $\alpha$, the automaton
$\mathrsfs{C} = \mathrsfs{A}_c(W_1, W \cup W_2)$ is synchronizing.
\end{theorem}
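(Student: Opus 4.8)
The plan is to deduce the statement from Theorem~\ref{thm:main_ext}; the obstacle is precisely the one flagged just before the theorem, namely that $\alpha$ is the stationary distribution of $\mathrsfs{B}=\mathrsfs{A}_c(W_1,W_2)$ and need not be a stationary distribution of $\mathrsfs{C}=\mathrsfs{A}_c(W_1,W\cup W_2)$ for any positive distribution. First I would dispose of the structural points. The state set $R=\{q.w\mid q\in Q,\ w\in W_1\}$ depends only on $W_1$, so $\mathrsfs{B}$ and $\mathrsfs{C}$ have the same set of states; the underlying digraph of $\mathrsfs{C}$ contains that of $\mathrsfs{B}$, hence is primitive as well (adding non-negative matrices preserves primitivity) and in particular $\mathrsfs{C}$ is strongly connected. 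Replacing $W_1,W_2,W$ by finite subsets that preserve these properties and the completeness of $W$, we may assume all three sets are finite. Next I would check that $WW_1$, and hence $(W\cup W_2)W_1$, is complete for $V_R$ with respect to $\alpha$: given $j\in R$, pick $q\in Q$ and $w_1\in W_1$ with $q.w_1=j$, write $[q]=\sum_{w\in W}\lambda_w\,\alpha[w]$ using completeness of $W$ for $\mathbb{R}^n$, and multiply by $[w_1]$ to get $[j]=[q][w_1]=\sum_{w\in W}\lambda_w\,\alpha[w][w_1]$; conversely every vector $\alpha[w_2][w_1]$ with $w_1\in W_1$ lies in $V_R$.

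The core of the argument is a perturbation. For $\varepsilon\in(0,1)$ set $P_2^{\varepsilon}=(1-\varepsilon)P_2+\varepsilon P_W$, a positive distribution on $W\cup W_2$, where $P_W$ is any positive distribution on $W$; keep $P_1$ on $W_1$ and let $\beta_\varepsilon\in V_R$ be the corresponding stationary distribution of $\mathrsfs{C}$, which is unique and positive since $\mathrsfs{C}$ is primitive. The transition matrix of $\mathrsfs{C}$ equals $(1-\varepsilon)$ times that of $\mathrsfs{B}$ plus $\varepsilon$ times the restriction of $[P_W][P_1]$ to $R$; this family is primitive for every $\varepsilon\in[0,1)$ and tends to the transition matrix of $\mathrsfs{B}$ as $\varepsilon\to0^{+}$, so, by continuity of the Perron eigenvector of a primitive stochastic matrix, $\beta_\varepsilon\to\alpha$. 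On the other hand, completeness of the fixed finite set $(W\cup W_2)W_1$ for $V_R$ with respect to a vector $g$ is equivalent to the matrix whose rows are $g[w]$, $w\in(W\cup W_2)W_1$, having rank $\dim V_R$, which is the maximal possible value because each $g[w_2 w_1]$ with $w_1\in W_1$ already lies in $V_R$; this is an open condition on $g$. It therefore holds throughout a neighborhood of $\alpha$, so $(W\cup W_2)W_1$ is complete for $V_R$ with respect to $\beta_\varepsilon$ once $\varepsilon$ is small enough.

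Fixing such an $\varepsilon$, the automaton $\mathrsfs{C}=\mathrsfs{A}_c(W_1,W\cup W_2)$ together with the positive distributions $P_1,P_2^{\varepsilon}$ and its stationary distribution $\beta_\varepsilon$ satisfies the hypotheses of Theorem~\ref{thm:main_ext} (with the roles of $W_2$ and $\alpha$ played by $W\cup W_2$ and $\beta_\varepsilon$), so Claim~\ref{i_1} of that theorem yields that $\mathrsfs{C}$ is synchronizing. The auxiliary word $w_0$ of Theorem~\ref{thm:main_ext} plays no role here: it serves only to lift a reset word of the auxiliary automaton to the original one in the bound of Claim~\ref{i_3}, whereas the conclusion of Claim~\ref{i_1} that the auxiliary automaton itself is synchronizing follows from completeness with respect to its stationary distribution via the Greedy Extension Algorithm. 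The only non-routine ingredient is the openness/continuity argument of the previous paragraph, which trades the ``wrong'' vector $\alpha$ for a genuine stationary distribution of $\mathrsfs{C}$ without losing completeness.
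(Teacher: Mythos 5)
Your proposal is correct and takes essentially the same approach as the paper: both proofs interpolate a one-parameter family of positive distributions between $P_2$ and a distribution on $W$, invoke continuity of the unique stationary distribution of a primitive stochastic matrix in the parameter, and use an openness (determinant-rank) argument to transfer completeness from $\alpha$ to the perturbed stationary distribution before applying Theorem~\ref{thm:main_ext}. The only cosmetic differences are that the paper picks the uniform distribution on $W$ rather than an arbitrary $P_W$, and phrases continuity via the explicit matrix inversion $\beta(\delta)=\tilde S^{-1}(\delta)(1,0,\dots,0)$ and the determinant $\phi(\delta)=\det(D_\delta)$ rather than appealing abstractly to Perron-eigenvector continuity and openness of full rank.
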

\begin{proof}
For each $0 \leq \delta < 1$ we define
$$S(\delta) = ((1-\delta)[P_2] + \frac{\delta}{|W|}\sum_{w \in W}[w])[P_1].$$
Clearly, $S(\delta)$ is a positive probability distribution on $(W \cup W_2)W_1$ for each $0 < \delta < 1$,
and on $W_2W_1$ for $\delta = 0$.
Because the matrix of the underlying digraph of $\mathrsfs{B}$ is primitive,
for each $0 \leq \delta < 1$ there is a unique stationary distribution $\beta(\delta)$ such that
$\beta(\delta) S(\delta) = \beta(\delta)$ or, equivalently, $\beta(\delta)$ is the
unique stochastic solution $x$ of the equation
$$x (S(\delta) - I_n) = (0,0,\dots,0).$$
Therefore $\beta(\delta) = \tilde{S}^{-1}(\delta)(1,0, \dots ,0)$,
where $\tilde{S}(\delta)$ is the invertible matrix obtained from the
matrix $S(\delta) - I_n$ by replacing the first row by the vector of
all $1$-s. Note that $\beta(0) = \alpha$, and $\beta(\delta)$ is
(component wise) continuous in $[0,1)$.

Since $W$ is complete with respect to $\alpha$,
there are words $w_1,w_2, \dots, w_n \in W$ such that the square matrix $D = (\alpha w_i )_{i \in \{1,2,\dots,n\}}$ has rank $n$.
For $0 \le \delta < 1$ define the matrix
$$D_{\delta} = (\beta(\delta) w_i)_{i \in \{1,2,\dots,n\}}$$
and consider the function $\phi(\delta) = \det(D_{\delta})$.
Since $\beta(\delta)$ is continuous in $[0,1)$, $\phi(\delta)$ is also continuous in $[0,1)$.
Since $\phi(0) = \det(D) \neq 0$, we get that $\phi(\delta') \neq 0$ for some $0 < \delta' < 1$.
Hence $W$ is complete for $\mathbb{R}^n$ with respect to $\beta(\delta')$.
Since $\beta(\delta')$ is the stationary distribution of the Markov chain defined on $(W \cup W_2)W_1$ by the positive probability distribution $S(\delta')$, by Theorem~\ref{thm:main_ext} we obtain that the automaton $\mathrsfs{C}$ is synchronizing.
\qed
\end{proof}

\subsection{Synchronizing quasi-Eulerian automata}
\label{subsec:quasi-eulerian}

Let $\alpha$ be the probability distribution on $\Sigma^{\leq d}$
induced by a probability distribution $P\colon \Sigma \mapsto \mathbb{R}^{+}$ on
the alphabet, that is, $[P_2] = \frac{1}{n}\sum_{i=0}^{d}[P]^{i}.$
Suppose that $d < \mathrm{poly}(n)$ is such that $\Sigma^{\leq d}$ is
complete for $\mathbb{R}^n$ with respect to $\alpha$.
Using the reduction procedure, we can construct a set $U$ of at most $n$ words such that
$$\langle \alpha U \rangle = \langle \alpha \Sigma^{\leq d} \rangle = \mathbb{R}^n.$$
However, $\alpha$ is not necessarily the stationary distribution for some positive probability distribution on $U$. The following lemma solves this problem.

\begin{lemma}\label{lem:about_ex_distr}
Let $W = \{ a u \mid u \in \Suff(U), a \in \Sigma \},$
where $\Suff(U)$ is the set of proper suffixes of $U$.
Then there exists a positive probability distribution on $W$ such that $\alpha$ is the corresponding stationary distribution.
\end{lemma}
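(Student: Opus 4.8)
The plan is to produce $\mu$ explicitly, by a controlled rewriting of the stationarity equation of $\alpha$. Write $[P]=\sum_{a\in\Sigma}P(a)[a]$; since $\alpha$ is the stationary distribution for $P$ we have $\alpha[P]=\alpha$, and right‑multiplying by $[v]$ gives, for every word $v$, the identity
$$\alpha[v]=\sum_{a\in\Sigma}P(a)\,\alpha[av];$$
call it $(\star)$. A positive probability distribution $\mu$ on $W$ has $\alpha$ as its stationary distribution exactly when $\alpha=\sum_{w\in W}\mu(w)\,\alpha[w]$ with all $\mu(w)>0$ and $\sum_{w\in W}\mu(w)=1$: indeed then $M=\sum_{w\in W}\mu(w)[w]$ is row stochastic and $\alpha M=\alpha$. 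So the goal is to exhibit $\alpha$ as a strictly positive convex combination of the vectors $\alpha[w]$, $w\in W$.

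First I would record two facts about $W$. Since $U$ contains a non‑empty word (otherwise $n=1$ and there is nothing to prove), $\varepsilon\in\Suff(U)$, and $\Suff(U)$ is closed under taking suffixes; consequently $\Sigma\subseteq W$ (take $u=\varepsilon$), and moreover $\Suff(U)\setminus\{\varepsilon\}\subseteq W$, because a non‑empty $v\in\Suff(U)$ can be written $v=bv'$ with $b\in\Sigma$ and $v'$ a proper suffix of the same word of $U$, so $v'\in\Suff(U)$ and $v=bv'\in W$. In particular, for $v\in\Suff(U)$ every word $av$ with $a\in\Sigma$ lies in $W$, so $(\star)$ rewrites $\alpha[v]$ as a combination of vectors $\alpha[w]$ with $w\in W$.

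Now I would run the following weighted expansion. Process $\{\varepsilon\}\cup W$ in order of non‑decreasing length, maintaining a representation $\alpha=\sum_w c_w\,\alpha[w]$ with $\sum_w c_w=1$, starting from $c_\varepsilon=1$. When the current word is $\varepsilon$, apply $(\star)$ in full, adding $c_\varepsilon P(a)$ to $c_a$ for every $a\in\Sigma$. When the current word $v$ lies in $\Suff(U)\setminus\{\varepsilon\}$ (hence $v\in W$), pick $\lambda_v\in(0,1)$, leave the amount $(1-\lambda_v)c_v$ on $\alpha[v]$, and apply $(\star)$ to the rest, adding $\lambda_v c_v P(a)$ to $c_{av}$ for every $a\in\Sigma$. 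When the current word is in $W\setminus\Suff(U)$, do nothing. Each step keeps the identity exact and keeps $\sum_w c_w=1$ (since $(1-\lambda_v)+\lambda_v\sum_aP(a)=1$); the procedure terminates because each word is handled once and all lengths stay below $n$; and by the previous paragraph the only vectors ever occurring are $\alpha[\varepsilon]$, eliminated at the first step, and $\alpha[w]$ with $w\in W$. Setting $\mu=c$ at the end gives the required combination, hence the distribution.

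The one delicate point — and the reason for keeping the fraction $\lambda_v$ rather than expanding fully — is that $\mu$ must be positive on \emph{all} of $W$. I would verify this by induction along the processing order: $c_\varepsilon=1$, and a word $w=av\in W$ first receives mass only when its ``parent'' $v$ (obtained by deleting the first letter of $w$) is processed; that parent lies in $\Suff(U)$, so it is either $\varepsilon$, fully expanded and contributing $P(a)>0$ to $c_w$, or an element of $\Suff(U)\setminus\{\varepsilon\}$, expanded with $\lambda_v\in(0,1)$ and contributing $\lambda_v c_v P(a)>0$ (with $c_v>0$ by the inductive hypothesis), after which later steps only add mass; since every $w\in W$ has its parent in $\Suff(U)$, every $w$ ends up with positive mass. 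A shorter non‑constructive alternative is available: it suffices to check that no $x\in\mathbb{R}^n$ satisfies $(x,\alpha[w])\le(x,\alpha)$ for all $w\in W$ with strict inequality somewhere; $(\star)$ and the same induction on $|v|$ force $(x,\alpha[v])=(x,\alpha)$ for all $v\in\Suff(U)$ and then $(x,\alpha[av])=(x,\alpha)$ for all $a\in\Sigma$, and together with $\alpha\in\mathrm{conv}\{\alpha[w]\mid w\in W\}$ (clear, as $\alpha=\sum_{a}P(a)\alpha[a]$ and $\Sigma\subseteq W$) this places $\alpha$ in the relative interior of that convex hull, which is equivalent to being a strictly positive convex combination of its generators. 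I do not anticipate any further difficulty; note that completeness of $U$ plays no role in this lemma.
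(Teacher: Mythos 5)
Your proof is correct, and both variants take a genuinely different route from the paper's. The paper opens with ``Since $W$ is complete with respect to $\alpha$\ldots'' to guarantee, for each $x\notin\langle[Q]\rangle$, some $w\in W$ with $(x,\alpha[w])\neq(x,\alpha)$, then runs a shortest-witness contradiction (using the same identity $\alpha[u]=\sum_a P(a)\alpha[au]$ and the closure property $u\in\Suff(U)\Rightarrow bu\in W$) to upgrade this to a strict positive inequality $(x,\alpha[w])>(x,\alpha)$, and finally cites Theorem~13 of~\cite{Berlinkov2013QuasiEulerianOneCluster} to convert this extension property into the desired positive distribution. You instead exhibit the distribution directly, either by the weighted expansion of the stationarity identity over $\Suff(U)$ (keeping a residual factor $1-\lambda_v$ at each step so that every $w\in W$ retains strictly positive mass), or by the relative-interior/separating-hyperplane characterization; your inductive argument that no $x$ can satisfy $(x,\alpha[w])\le(x,\alpha)$ for all $w\in W$ with strict inequality somewhere is essentially the same computation as the paper's shortest-witness step, just packaged as a convexity statement. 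The payoff of your route is that it is self-contained (no appeal to the external Theorem~13), the constructive version fits the algorithmic theme of the section, and you correctly notice that completeness of $U$ is not actually used in this lemma -- it is only needed in the surrounding theorem where $W$'s completeness matters -- whereas the paper's proof invokes it in its first sentence. The one point you might state explicitly is that your argument yields $\alpha$ as \emph{a} stationary distribution ($\alpha M=\alpha$ for $M=\sum_{w}\mu(w)[w]$); uniqueness, if one wants ``the'' stationary distribution, comes from $\Sigma\subseteq W$ and strong connectivity of $\mathrsfs{A}$, which the paper also leaves implicit.
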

\begin{proof}


Since $W$ is complete with respect to $\alpha$, following the proof of Theorem~\ref{thm:main_ext} for
each $x \in \mathbb{R}^n \setminus \langle[Q]\rangle$, there exists $w \in W$ such that $(x,\alpha[w]) \neq (x,\alpha)$.
Suppose that $w$ is a shortest word from $W$ with this property.
If $(x,\alpha[w]) > (x,\alpha)$ then we have found an extension word from $W$.
Suppose that $(x,\alpha[w]) < (x,\alpha)$.
Clearly $1 \leq |w| \leq d$, and $w = a u$ for $a \in \Sigma$ and $u \in \Sigma^{\leq d-1}$.
Since
$$(x,\alpha[u]) = (x,\alpha[P][u]) = P(a)(x,\alpha[w]) + \sum_{b \in \Sigma, b \neq a}{P(b)(x,\alpha[b u])},$$
we get that either $(x,\alpha[u]) < (x,\alpha)$ or $(x,\alpha[b u]) > (x, \alpha)$ for some $b \neq a$.
Since $w \in W$, we have $u \in \Suff(U)$ and so $b u \in W$.
If $(x,\alpha[u]) < (x,\alpha)$ then $u \neq \varepsilon$, so $u \in W$, and $u$ is a shorter word with $(x,\alpha[u]) \neq (x,\alpha)$, which contradicts the choice of $w$.
Therefore by~\cite[Theorem 13]{Berlinkov2013QuasiEulerianOneCluster}
the automaton $\mathrsfs{B} = \mathrsfs{A}_c(\{ \varepsilon \},W)$ is
synchronizing and $\alpha$ is the stationary distribution for some
probability distribution on $W$.
\qed
\end{proof}

As an application we get a polynomial algorithm for finding a reset word for the class of \emph{quasi-Eulerian} automata, a generalization of Eulerian automata.
We call an automaton $\mathrsfs{A}$ \emph{quasi-Eulerian} with respect to an integer $c \ge 0$ if it satisfies the following two conditions:
\begin{enumerate}
\item There is a subset $E_c\subseteq Q$ containing $n-c$ states such that only one of these states, say $s$,
can have incoming edges from the set $Q \setminus E_c$.

\item There exists a positive probability distribution $P$ on $\Sigma$ such that
the columns of the matrix $[P]$ that correspond to the states from $E_c\setminus\{s\}$ sum up to 1.
\end{enumerate}

Within this definition, for $c=0$ we get so-called \emph{pseudo-Eulerian} automata,
and if additionally $P$ is uniform on $\Sigma$, then we get Eulerian automata.
The upper bound $1+(n-2)(n-1)$ on the reset thresholds of Eulerian automata was found by Kari~\cite{Kari2003Eulerian},
and extended to the class of pseudo-Eulerian automata by Steinberg~\cite{Steinberg2011AveragingTrick}.
These results were generalized in~\cite[Corollary~11]{Berlinkov2013QuasiEulerianOneCluster}
by showing the upper bound $2^c (n-c+1)(n-1)$ for the class of quasi-Eulerian automata with respect to a non-negative integer $c$.
The following theorem gives a polynomial time algorithm for finding reset words satisfying these bounds.

\begin{theorem}\label{thm:quasi_eulerian_alg}
Given a synchronizing automaton $\mathrsfs{A}$ which is quasi-Eulerian with respect to an integer $c \ge 0$,
there is a polynomial time algorithm for finding a reset word of length at most:
\begin{align*}
\begin{cases}
  2^c (n-c+1)d & \text{ if } c>0;\\
  1+(n-2)d & \text{ if } c=0,
\end{cases}
\end{align*}
where $d \leq n-1$ is the smallest integer such that $\Sigma^{\leq d}$ is complete.
\end{theorem}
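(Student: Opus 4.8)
The plan is to instantiate Theorem~\ref{thm:main_ext} with the probability distribution coming from the quasi-Eulerian structure, to bound $\DS(\alpha)$ for the resulting stationary distribution $\alpha$, and then to make the argument algorithmic by replacing the exponential set $\Sigma^{\le d}$ by a polynomial one via the reduction procedure and Lemma~\ref{lem:about_ex_distr}. First I would fix, or compute in polynomial time, the data witnessing quasi-Eulerianity: the set $E_c$ of $n-c$ states, the distinguished state $s\in E_c$, and a positive probability distribution $P$ on $\Sigma$ for which the columns of $[P]$ indexed by $E_c\setminus\{s\}$ sum to $1$ (for each candidate $s$ the largest admissible $E_c$ is obtained by a fixpoint computation, and feasibility of the second defining condition is a linear program). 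A non-strongly-connected synchronizing $\mathrsfs{A}$ is handled by passing to its sink component and prepending a bounded-length word reaching it, exactly as in the proof of Theorem~\ref{thm:about_threshold}, so I assume $\mathrsfs{A}$ strongly connected. Let $\alpha$ be the stationary distribution of the Markov chain of $\mathrsfs{A}$ for $P$; it is unique and positive since the underlying digraph of a strongly connected synchronizing automaton is primitive. Let $d\le n-1$ be the smallest integer for which $\Sigma^{\le d}$ is complete for $\mathbb{R}^n$ with respect to $\alpha$; it exists by Criterion~\ref{crit:syn}.

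The analytic core is a bound on $\DS(\alpha)$, obtained from the averaging argument behind \cite[Corollary~11]{Berlinkov2013QuasiEulerianOneCluster}: since every edge entering $E_c\setminus\{s\}$ originates inside $E_c$, the column condition on $[P]$ forces $\alpha$ to take a single common value $v$ on all $n-c$ states of $E_c$. Therefore each positive subset-sum of $\alpha$ equals $jv$ plus a subset-sum of the remaining $c$ entries, with $0\le j\le n-c$, so $\DS(\alpha)\le 2^c(n-c+1)-1$; and when $c=0$ every column of $[P]$ sums to $1$, $\alpha$ is uniform, and $\DS(\alpha)=n$.

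For the construction I would first run the reduction procedure for $\alpha$-completeness with input $n-1$; it returns in polynomial time a set $U\subseteq\Sigma^{\le d}$ with $|U|\le n$ and $\langle\alpha U\rangle=\langle\alpha\Sigma^{\le d}\rangle=\mathbb{R}^n$, its stopping level being exactly the minimal $d$ above. Put $W=\{au\mid u\in\Suff(U),\,a\in\Sigma\}$: a polynomial set of words of length $\le d$, still complete for $\mathbb{R}^n$ with respect to $\alpha$, and, by Lemma~\ref{lem:about_ex_distr}, carrying a positive probability distribution whose stationary distribution is exactly $\alpha$. Hence $\mathrsfs{B}=\mathrsfs{A}_c(\{\varepsilon\},W)$ together with $\alpha$ satisfies the hypotheses of Theorem~\ref{thm:main_ext} with $W_1=\{\varepsilon\}$, $W_2=W$, $w_0=\varepsilon$, $R=Q$, and every letter of $\mathrsfs{B}$ of length $\le d$. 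Running the greedy extension algorithm on $\mathrsfs{B}$: since $R=Q$ we start with a single letter $a$ with $|q.a^{-1}|>1$, after which each step appends a word of $W$ (of length $\le d$) strictly increasing $(x_i,\alpha)$; as every $x_i$ is a $0/1$ vector the process halts after at most $\DS(\alpha)-2$ further steps and returns a reset word of length at most $1+(\DS(\alpha)-2)d$. By the bounds on $\DS(\alpha)$ this is $1+(n-2)d$ when $c=0$ and at most $2^c(n-c+1)d$ when $c>0$. Each step (forming the vectors $\alpha[w]$, computing inner products, and testing membership in $\langle[R]\rangle$) is polynomial in $n$, and the number of steps is at most $\DS(\alpha)-1$, so the algorithm runs in time polynomial in $n$ and in the length of the word it outputs.

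I expect the two places requiring care to be: (i) the structural claim that $\alpha$ is constant on $E_c$ — this is exactly where the second defining condition of quasi-Eulerianity is used, and I would prove it following \cite{Berlinkov2013QuasiEulerianOneCluster} by restricting the chain to $E_c$, where the rows and columns relevant to $E_c\setminus\{s\}$ still behave stochastically; and (ii) reconciling the two demands on $W$ — completeness with respect to $\alpha$, and existence of a positive distribution on $W$ with stationary vector $\alpha$ — since a naive polynomial-size replacement of $\Sigma^{\le d}$ breaks the second, and it is precisely the passage to proper suffixes in Lemma~\ref{lem:about_ex_distr} that repairs it.
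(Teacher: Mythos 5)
Your proof is correct and follows essentially the same route as the paper's: compute a stationary distribution $\alpha$ with $n-c$ equal entries from the quasi-Eulerian data, reduce $\Sigma^{\le d}$ to a polynomial set via the $\alpha$-completeness reduction procedure, apply Lemma~\ref{lem:about_ex_distr} to repair the stationary-distribution property, and run the greedy extension algorithm with $W_1=\{\varepsilon\}$ and $R=Q$. You supply a few details the paper leaves implicit (the bound $\DS(\alpha)\le 2^c(n-c+1)-1$, the sink-component reduction) and sidestep a real wrinkle in the paper's version: the displayed ``linear program'' there has the constraint $\alpha[P]=\alpha$ with both $\alpha$ and $P$ as variables, which is bilinear rather than linear, whereas your decoupled computation (first determine $E_c$, $s$, $P$ from the quasi-Eulerian conditions, then solve the genuinely linear system $\alpha[P]=\alpha$ for the now-fixed $P$) is the correct and actually polynomial procedure.
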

\begin{proof}
First we need to calculate a stationary distribution $\alpha$, which has $n-c$ equal entries. For this purpose, for each of the $\binom{n}{n-c}$ ways of choosing the set $E_c$ containing $n-c$ states, we find a solution of the following task of linear programming:
\begin{align*}
\begin{cases}
  \alpha [P] = \alpha,\\
  ([Q], \alpha ) = 1,\\
  \alpha_p = \alpha_q & \text{ for each } p \in E_c,\\
  P(a) > 0 & \text{ for each } a \in \Sigma;
\end{cases}
\end{align*}
with the variable set
$$\{ P(a) \mid a \in \Sigma\}, \{ \alpha_p \mid p \in Q \},$$
and $q$ is an arbitrary state from $E_c$.
If there is a solution $(\alpha,P)$, then $\alpha$ is the stationary distribution for the positive probability distribution $P$ on the alphabet and it has at least $n-c$ equal entries.
Since $\binom{n}{n-c}$ is polynomial and linear programming is solvable in polynomial time, such a solution can be found in polynomial time.

Next, according to the reduction procedure for $\alpha$-completeness we can find
a polynomial set of words $W' \subseteq \Sigma^{\leq d}$ which is
complete for $\mathbb{R}^n$ with respect to $\alpha$. Due to
Lemma~\ref{lem:about_ex_distr} we can change the set $W'$ to a set
$W$ of polynomial size preserving the stationary distribution
$\alpha$ and then use the greedy extension algorithm to find a reset
word of the proposed lengths.
\qed
\end{proof}

\subsection{Synchronizing quasi-one-cluster automata}
\label{subsec:quasi-one-cluster}

The \emph{underlying digraph} of a letter $a \in \Sigma$ is the digraph with edges labeled by $a$.
Every connected component, called \emph{cluster}, in the underlying digraph of a letter has exactly one cycle, and possible some trees rooted on this cycle.
An automaton $\mathrsfs{A}=(Q,\Sigma,\delta)$ is called \emph{one-cluster} if there is a letter $a \in \Sigma$ whose underlying digraph has only one cluster.
An automaton $\mathrsfs{A}$ is \emph{quasi-one-cluster} with respect to an integer $c \ge 0$ if it has a letter whose underlying digraph has a cluster such that there are at most $c$ states in the cycles of all other clusters. Clearly, one-cluster automata are quasi-one-cluster with respect to $c=0$.
An automaton $\mathrsfs{A}$ is \emph{circular} is it has a letter whose underlying digraph consists of only one cycle of length $n$.

The \v{C}ern\'{y} conjecture was proved for \emph{circular} automata \cite{Dubuc1998}, and for one-cluster automata with prime length cycle \cite{Steinberg2011OneClusterPrime}. Also, quadratic bounds for the reset thresholds in the general case of one-cluster automata were presented \cite{BP2009QuadraticUpperBoundInOneCluster,BBP2011QuadraticUpperBoundInOneCluster,Steinberg2011AveragingTrick,CarpiDAlessandro2013IndependendSetsOfWords}.
In~\cite{Berlinkov2013QuasiEulerianOneCluster} the upper bound $2^c (2n-c-2)(n-c + 1)$ was proved for quasi-one-cluster with respect to $c$.

The following theorem gives a polynomial algorithm finding a reset word
for quasi-one-cluster automata, whose length is of the mentioned
bounds.

\begin{theorem}\label{thm:quasi_one-cluster_alg}
Let $\mathrsfs{A}$ be a synchronizing automaton that is quasi-one-cluster with respect to a letter $a$ and $c \ge 0$.
Let $C$ be the largest cycle of $a$ and $h$ be the maximal height of the trees labeled by $a$.
Let $W_1 = \{ a^{h+i} \mid i \in \{0,\ldots,|C|-1\}\}$.
Then there is a polynomial algorithm for finding a reset word for $\mathrsfs{A}$ of length at most
\begin{align*}
\begin{cases}
  2^c (2n - c)(n - c + 1) & \text{ if } c > 0;\\
  1 + (2n - r)(n - 2) & \text{ if } c = 0,
\end{cases}
\end{align*}
where $r$ is the smallest dimension of $\langle W_1 \beta \rangle$ for $\beta \in V_{C} \setminus \langle[C]\rangle$.
In particular, if $|C|$ is prime then $r = |C|$.
\end{theorem}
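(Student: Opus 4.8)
The plan is to set up the machinery of Theorem~\ref{thm:main_ext} with $W_1 = \{a^{h+i} \mid i \in \{0,\ldots,|C|-1\}\}$ and $W_2 = \Sigma^{\le d}$ for a suitable $d \le n-1$, and then reduce everything to polynomial size using the reduction procedures from Section~\ref{sec:algorithms}. First I would observe that since $C$ is the largest cycle of $a$ and $h$ is the maximal height of the $a$-trees, the word $a^h$ maps $Q$ onto the union of the cycles of $a$, and the words in $W_1$ act as the cyclic rotations of $C$ together with fixed behaviour on the other (now cyclic) parts; in particular $R := Q.W_1$ consists of all states lying on cycles of $a$, so $|R| \le n$ and $|C| + c \ge |R| \ge |C|$. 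The key structural point is that $V_C \le V_R$, and by the quasi-one-cluster hypothesis the sink component of $\mathrsfs{B} = \mathrsfs{A}_c(W_1,W_2)$ meets $C$; the parameter $r$ — the smallest dimension of $\langle W_1\beta\rangle$ over $\beta \in V_C\setminus\langle[C]\rangle$ — measures how much of $V_C$ the cyclic shifts in $W_1$ already span. When $|C|$ is prime, the $|C|\times|C|$ circulant spanned by all rotations of any vector not proportional to $[C]$ has full rank, giving $r = |C|$; this is the ``in particular'' clause and follows from the standard fact that a non-constant vector generates the whole augmentation-orthogonal hyperplane under cyclic shifts modulo a prime, plus the constant vector.

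Next I would invoke Criterion~\ref{crit:syn} (via Theorem~\ref{thm:find_generators}, to handle the fact that $(W \cup W')W_1$ need not carry a positive distribution inducing the right stationary vector) to conclude that the auxiliary automaton $\mathrsfs{C} = \mathrsfs{A}_c(W_1, W\cup W_2)$ is synchronizing, where $W$ is the polynomial primitive reduction of $\Sigma^{\le d}$ and $W'$ is the $\alpha$-complete reduction. The greedy extension algorithm then produces a reset word of $\mathrsfs{B}$ in at most $\DS(\alpha)-1$ steps, and by Theorem~\ref{thm:main_ext}\eqref{i_3} a reset word of $\mathrsfs{A}$ whose length is bounded by $|w_0| + \rt(\mathrsfs{B})(d_1+d_2)$, with $d_1 = \max_{w\in W_1}|w| = h + |C| - 1$ and $d_2 = d \le n-1$. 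The counting of $\DS(\alpha)$ is where the stated bounds come out: one argues, exactly as in~\cite[Corollary~11]{Berlinkov2013QuasiEulerianOneCluster} and the quasi-one-cluster bound $2^c(2n-c-2)(n-c+1)$ there, that $\alpha$ (being the stationary distribution of a Markov chain whose support is essentially the $a$-cycles) has at most $2^c$ distinct values off a block of $n-c$ equal entries, so $\DS(\alpha) \le 2^c(n-c+1)$ in the general case; combined with $d_1 + d_2 \le 2n - c$ (using $h + |C| \le n$ and $d \le n - c$) this yields $2^c(2n-c)(n-c+1)$. In the one-cluster case $c = 0$ the stationary vector is uniform on $C$, $\DS(\alpha) = |C|$, and here the refinement via $r$ enters: the greedy extension starting from $x_0 = [q]$, $q$ having $|q.a^{-1}| > 1$, only needs to distinguish the at most $|C|$ partial sums, but the first $r$ of the required basis vectors are already supplied by $W_1$ acting on $V_C$, so only $|C| - r$ genuine extension steps are needed from $V_C$, plus the $(n-2)$-type count for the tree part, giving $1 + (2n-r)(n-2)$ after substituting $d_1 + d_2 \le 2n - r$.

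The main obstacle, I expect, is the precise bookkeeping that turns the parameter $r$ into the factor $(2n-r)$ rather than $(2n-|C|)$ in the $c=0$ bound: one must show that the greedy extension algorithm, when run inside the sink component restricted to $V_C$, can be initialized on the $r$-dimensional subspace $\langle W_1\beta\rangle$ ``for free'' — i.e., the intermediate $0/1$-vectors $x_i$ that still lie in $V_C$ but outside $\langle[C]\rangle$ can always be reached, or skipped, using the $W_1$-rotations without consuming a step of length $d_1 + d_2$. This requires checking that $x_i \mapsto x_i[u_i]^t$ with $u_i$ chosen from $W_1$ alone already realizes the span $\langle W_1\beta\rangle$, so that the residual number of full-cost steps is $\DS(\alpha|_C) - r \le |C| - r$, and then folding the constant $+1$ (for the initial letter $a$ when $R = Q$) and the $(n-2)$ contribution from synchronizing the $a$-trees together with the Pin/Frankl-type estimate, exactly as in the one-cluster arguments of~\cite{Steinberg2011OneClusterPrime,BBP2011QuadraticUpperBoundInOneCluster}. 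Everything else — primitivity of the underlying digraph of $\mathrsfs{B}$ (from strong connectedness and synchronization of $\mathrsfs{A}$), polynomiality of the reductions, and solvability in polynomial time of the eigenvector and linear-programming subproblems — is routine given the tools already established above.
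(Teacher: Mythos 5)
Your high-level framework — invoke Theorem~\ref{thm:main_ext} with $W_1=\{a^{h+i}\mid i<|C|\}$, reduce $W_2$ to a polynomial set, run the greedy extension algorithm, and track $\DS(\alpha)$ — matches the paper, and your reasons for why $\alpha$ is supported and uniform on cycle states and for why $|C|$ prime forces $r=|C|$ are fine. But the mechanism you give for how $r$ enters the $c=0$ bound is wrong, and you flag the right spot yourself when you call it ``the main obstacle.'' You propose that $r$ of the greedy extension steps are ``supplied for free'' by $W_1$-rotations, cutting the number of costly steps from $\DS(\alpha)-2$ down to something like $|C|-r$; this does not happen, and in fact cannot be made to match the target $1+(2n-r)(n-2)$, where the factor $(n-2)$ is the full step count and $(2n-r)$ is the per-step cost. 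In the paper $r$ reduces the \emph{length} of each step, not the \emph{number} of steps: one takes $W_2=\Sigma^{\le\,n-r+1}$ for $c=0$ (and $\Sigma^{\le n-1}$ for $c>0$), so that $d_2=n-r+1$ and $d_1+d_2 = (h+|C|-1)+(n-r+1)\le 2n-r$. The justification that $W_2W_1$ is still complete for $V_C$ with this shortened $W_2$ is precisely the extension lemma for one-cluster automata in Steinberg's work~\cite{Steinberg2011OneClusterPrime} (every nontrivial subset of $C$ is extendable by a word of $W_2W_1$ with this $W_2$), which the paper cites and which is where the dimension parameter $r$ actually does its work. Without that ingredient the bound you write down is asserted rather than proved.

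A secondary but real divergence: you route synchronization of the auxiliary automaton through Theorem~\ref{thm:find_generators} plus Criterion~\ref{crit:syn} to fix up the stationary distribution of the reduced $(W\cup W')W_1$. The paper does not need this here: once the Steinberg extension lemma gives completeness of $W_2W_1$ for $V_C$ with respect to \emph{every} stochastic vector in $V_C$, Theorem~\ref{thm:main_ext} applies directly, and the stationary vector $\beta$ is already known to be uniform on $C$ and supported on cycle states regardless of which positive distribution on $W_2W_1$ (or its reduction) one picks. Your detour is not incorrect as a fallback, but it obscures the point that $\beta$ is fully determined structurally, which is what makes the $\DS(\beta)$ bookkeeping ($\le 2^c(|C|+1)$ for $c>0$, and $|C|$-ish for $c=0$, with the worst case at $|C|=n-c$, $h=0$) clean. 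To repair the proposal you should (i) replace the ``skip $r$ steps'' heuristic by the explicit choice $W_2=\Sigma^{\le\,n-r+1}$ justified by~\cite{Steinberg2011OneClusterPrime}, and (ii) either drop the Theorem~\ref{thm:find_generators} detour or note that it is only a generic safety net, not the source of the bound.
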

\begin{proof}
We can assume that $\mathrsfs{A}$ is strongly connected; otherwise,
we can use the same technique as in Theorem~\ref{thm:about_threshold}. 

Let us define $W_2 = \Sigma^{\leq n - r + 1}$ for $c = 0$ and $W_2 = \Sigma^{\leq n-1}$ otherwise.
It is proved in \cite{Steinberg2011OneClusterPrime} that for one-cluster automata 
each non-trivial subset of $S \subseteq C$ can be extended to a bigger one by a word from $W_2 W_1$.
Hence due to the greedy extension algorithm the induced automaton is synchronizing and $W_2 W_1$ is complete for $V_C$ with respect to any stochastic vector from $V_C$.
Thus in both cases we get that $W_2W_1$ is complete for $V_{Q.a^h}$. Using the reduction procedure $W_2$ can be replaced with a polynomial set of words $W$ while keeping the maximal length of words.

Let $\beta$ be the stationary distribution for some positive distribution on $W_2W_1$. Then $\beta_p > 0$ if and only if $p$ is a cycle state and $\beta_p = \beta_q$ for each $p,q \in C$.
Clearly $\DS(\beta) \leq 2^c (|C|+1)$ if $c>0$, and $\DS(\beta) = |C|-1$ if $c=0$.
According to Theorem~\ref{thm:main_ext} the automaton $\mathrsfs{B}
= \mathrsfs{A}_c(W_1,W)$ is synchronizing and we get that
\begin{align*}
\rt(\mathrsfs{A}) \leq
\begin{cases}
    h + 2^c (|C| + 1)(h + |C| + n) & \text{ if } c > 0;\\
    1 + (h + |C| + n - r)(n - 2) & \text{ if } c = 0.
\end{cases}
\end{align*}
Since the worst case appears when $|C|=n-c$ and $h=0$, the bound follows.
Since $W_1$ and $W$ have polynomial size, a reset word of this bound can be found by the greedy extension algorithm in polynomial time.
\qed
\end{proof}

\begin{remark}
The algorithm of Theorem~\ref{thm:quasi_one-cluster_alg} works also for the bounds from~\cite{CarpiDAlessandro2013IndependendSetsOfWords} for one-cluster automata. This follows in the same way as referring to~\cite{Steinberg2011OneClusterPrime} in the proof of the theorem.
\end{remark}


\section{The \v{C}ern\'{y} conjecture and random automata}
\label{sec:cerny_random}

Using the new bound, we can extend the class of automata for which
the \v{C}ern\'{y} conjecture is proven. In particular, we can
improve the result from~\cite{Pin1972Utilisation}, where the
\v{C}ern\'{y} conjecture is proven for automata with a letter of
rank at most $1+\log_2 n$.

\begin{corollary}
Let $\mathrsfs{A} = (Q,\Sigma,\delta)$ be a synchronizing automaton.
If there is a letter of rank
$$r \le \sqrt[3]{6n-6},$$
then $\mathrsfs{A}$ satisfies the \v{C}ern\'{y} conjecture.
\end{corollary}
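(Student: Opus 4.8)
The plan is to apply Theorem~\ref{thm:about_threshold} directly with the word $w$ taken to be the given letter of rank $r$, so that $|w| = 1$. First I would fix a letter $a \in \Sigma$ with $|Q.a| = r$, and set $w = a$. To invoke Theorem~\ref{thm:about_threshold} I need a value $d < n$ for which $\Sigma^{\le d}$ is complete for $V_S$ (the sink component) with respect to any stochastic vector in $V_S$, and such that every state $q$ can be brought into $S \cap Q.a$ by a word of length at most $d$. The standard ascending-chain argument (as in the Reduction Procedure, or in~\cite{Kari2003Eulerian,Steinberg2011OneClusterPrime}) shows that $\Sigma^{\le n-1}$ always has the required completeness property once the automaton is synchronizing, and the reachability condition is likewise satisfied for $d = n-1$ since any state reaches the sink component within $n-1$ steps and, by synchronizability, $S \cap Q.a \neq \emptyset$ can be targeted. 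So I may take $d \le n - 1$.

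Next I would plug $|w| = 1$ and $d \le n-1$ into the bound of Theorem~\ref{thm:about_threshold}. In the case $r \ge 4$, the bound reads
$$\rt(\mathrsfs{A}) \le (|w| + d)\frac{r^3 - r}{6} - d \le (1 + (n-1))\frac{r^3 - r}{6} - 0 = n \cdot \frac{r^3 - r}{6}.$$
Now I impose $r \le \sqrt[3]{6n - 6}$, i.e.\ $r^3 \le 6n - 6$, hence $r^3 - r \le 6n - 6 - r \le 6n - 6$, giving
$$\rt(\mathrsfs{A}) \le n \cdot \frac{6n - 6}{6} = n(n-1) < (n-1)^2 + (n-1) \cdot 1,$$
wait — I should be more careful: $n(n-1)$ is not $\le (n-1)^2$. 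The sharper accounting uses $r^3 - r \le 6n - 6 - r$ together with the fact that the reachability depth $d$ can be bounded by $n - r$ rather than $n-1$ in the relevant estimate, or else the $-d$ term must be exploited. The cleaner route is: take $d$ to be the true minimal value; since $Q.a$ already has size $r$, collapsing to the sink component and the completeness both happen within $n - r$ steps in the worst case (the chain of subspaces $V_1 < V_2 < \cdots$ restricted to $V_S$ stabilizes after at most $\dim V_S - 1 \le r - 1$ increments once we are inside, but reaching $S\cap Q.a$ costs at most $n-r$). Using $d \le n - r$ gives $(1 + n - r)\frac{r^3-r}{6} - (n-r)$, and with $r^3 - r \le 6(n-1) - r^3/\text{(something)}$ one checks this is at most $(n-1)^2$; this is the routine inequality I would verify by hand.

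For the remaining case $r \le 3$, Theorem~\ref{thm:about_threshold} gives $\rt(\mathrsfs{A}) \le |w| + (|w| + d)(r-1)^2 \le 1 + (1 + (n-1)) \cdot 4 = 4n + 1$ when $r = 3$, and smaller for $r \le 2$; for all $n$ with $\sqrt[3]{6n-6} \ge 3$, i.e.\ $n \ge 6$ roughly, one has $4n + 1 \le (n-1)^2$, and the small cases $n \le 5$ (where only $r \le 2$ is possible) are checked directly. The main obstacle is purely the arithmetic of choosing the right bound on $d$ and verifying the final inequality $n \cdot \frac{r^3-r}{6} \le (n-1)^2$ holds under the hypothesis $r^3 \le 6n-6$ — this needs the $-d$ saving or a slightly better $d$, and is the only place requiring care; everything else is a direct specialization of the already-proven Theorem~\ref{thm:about_threshold}.
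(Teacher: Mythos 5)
Your overall approach is exactly the paper's: apply Theorem~\ref{thm:about_threshold} with the rank-$r$ letter as $w$ (so $|w|=1$) and $d=n-1$, then plug in $r^3\le 6n-6$. But you derail yourself with an arithmetic slip, and your proposed repair is a wrong turn.

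The slip: you drop the $-d$ term, writing $(1+(n-1))\frac{r^3-r}{6}-0 = n\cdot\frac{r^3-r}{6}$ and then (correctly) observe this is not $\le(n-1)^2$. That is because the $-d$ term is exactly what makes the bound close. Keeping it, with $d=n-1$ and $|w|=1$, the bound from Theorem~\ref{thm:about_threshold} for $r\ge 4$ is
$$\rt(\mathrsfs{A}) \le n\cdot\frac{r^3-r}{6}-(n-1) = n\left(\frac{r^3-r}{6}-1\right)+1.$$
Now since $r>0$ gives $\frac{r^3-r}{6}<\frac{r^3}{6}\le\frac{6n-6}{6}=n-1$, one gets
$$\rt(\mathrsfs{A}) < n\bigl((n-1)-1\bigr)+1 = n(n-2)+1 = (n-1)^2,$$
which is the desired conclusion. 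Your subsequent speculation that one needs a sharper $d\le n-r$ is unnecessary (and the justification you sketch for it is not sound: there is no reason the ascending chain for $V_S$ must stabilize in $\le r-1$ steps, nor that reaching $S\cap Q.a$ costs only $n-r$); $d=n-1$ suffices once the $-d$ term is retained. Your treatment of $r\le 3$ and the appeal to verified small cases $n\le 5$ matches the paper. So: right strategy, but the central inequality needs to be redone correctly by not discarding $-d$, and the alternative route via a tighter $d$ should be dropped.
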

\begin{proof}
Assume that $r \ge 3$. Using Theorem~\ref{thm:about_threshold} with $d=n-1$ and $|w|=1$ we obtain
the bound $\rt(\mathrsfs{A}) \le n(\frac{r^3 - r}{6}-1)+1$. Then
using $r \le \sqrt[3]{6n-6}$ we obtain
$$\rt(\mathrsfs{A})
< n\left(\frac{r^3}{6}-1\right)+1 \le
n\left(\frac{6n-6}{6}-1\right)+1 = (n-1)^2.
$$
If $r \le 3$ then the bound of Theorem~\ref{thm:about_threshold} is $1+n(r-1)^{2}$, which is not larger than $(n-1)^2$ for $n \ge 6$. For $n \le 5$ the \v{C}ern\'{y} conjecture has been verified \cite{KS2014SynchronizingAutomataWithLargeResetLengths}.
\qed
\end{proof}

Another corollary concerns random synchronizing automata.
We consider the uniform distribution $P_s$ on all synchronizing binary automata with $n$ states, which is formally defined by
$P_s(\mathcal{A}) = P(\mathcal{A}) / P_n$, where $P$ is the uniform distribution on all $n^{2n}$ binary automata, and $P_n$ is the probability that a uniformly random binary automaton is synchronizing.
It is known that $P_n$ tends to $1$ as $n$ goes to infinity (\cite{Berlinkov2013OnTheProbabilityToBeSynchronizable,Nicaud2014FastSynchronizationOfRandomAutomata}).

Given an arbitrary small $\varepsilon>0$ and $n$ large enough, Nicaud~\cite{Nicaud2014FastSynchronizationOfRandomAutomata} proved that a random binary automaton with $n$ states has a word of
\begin{enumerate}
\item length $n^{3/4+3\varepsilon}(1+o(1))$ and rank at most $n^{1/4+2\varepsilon}$ with probability at least $1-O(\exp(-n^\varepsilon/4))$,
\item length $n^{7/8+7\varepsilon}(1+o(1))$ and rank at most $n^{1/8+4\varepsilon}$ with probability at least $1-O(n^{-1/4+3\varepsilon})$,
\end{enumerate}

The following corollary is a consequence of these results and our Theorem~\ref{thm:about_threshold}.

\begin{corollary}\label{cor:cerny_random}
For any $\varepsilon>0$ and $n$ large enough, with probability at least $1 - O(\exp(n^{-\varepsilon/4}))$, a random $n$-state automaton with at least two letters has a reset word of length at most $n^{7/4+6\varepsilon}(1+o(1))$, and so satisfies the \v{C}ern\'{y} conjecture.
Moreover, the expected value of the reset threshold of a random synchronizing binary automaton is at most $n^{3/2+o(1)}$.
\end{corollary}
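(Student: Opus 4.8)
The plan is to combine Nicaud's two probabilistic statements with the threshold bound of Theorem~\ref{thm:about_threshold}, applied with $|w|$ and $d$ chosen according to the word supplied by Nicaud. Recall that Theorem~\ref{thm:about_threshold} says that if $w$ has rank $r=|Q.w|$ and $\Sigma^{\le d}$ is complete for $V_S$ with respect to every stochastic vector in $V_S$ and routes every state into $S\cap Q.w$ within length $d$, then for $r\ge 4$ we get $\rt(\mathscr{A})\le (|w|+d)\frac{r^3-r}{6}-d$. For a random binary automaton we may always take $d=n-1$: a random binary automaton is strongly connected with high probability (in fact with probability $1-o(1)$, and the failure probability is absorbed in the stated error terms), hence it has a unique sink component, and $\Sigma^{\le n-1}$ is complete for $\mathbb{R}^n$ with respect to any stochastic vector whenever the automaton is synchronizing, since the matrices $[w]^T$ for $w\in\Sigma^{\le n-1}$ span the same space as all $[w]^T$ by the ascending chain argument used in Section~\ref{sec:algorithms}. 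So the bound becomes roughly $\rt(\mathscr{A})\le (|w|+n)\cdot\frac{r^3}{6}$.

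For the high-probability statement I would plug in Nicaud's item~(1): with probability at least $1-O(\exp(-n^{\varepsilon}/4))$ there is a word $w$ of length $|w|=n^{3/4+3\varepsilon}(1+o(1))$ and rank $r\le n^{1/4+2\varepsilon}$. Then $|w|+n = n(1+o(1))$ since $3/4+3\varepsilon<1$ for small $\varepsilon$, and $r^3\le n^{3/4+6\varepsilon}$, so
$$\rt(\mathscr{A})\le (|w|+n)\cdot\tfrac{r^3-r}{6} \le n^{1+3/4+6\varepsilon}(1+o(1)) = n^{7/4+6\varepsilon}(1+o(1)),$$
which for small enough $\varepsilon$ is below $(n-1)^2$, giving the Černý conjecture for these automata. (If $r\le 3$ one uses instead the second branch of Theorem~\ref{thm:about_threshold}, giving an even smaller bound; this degenerate case is harmless.) The stated error probability $1-O(\exp(n^{-\varepsilon/4}))$ matches Nicaud's up to the obvious rewriting.

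For the expectation bound I would split the probability space. On the event of item~(2) — probability $1-O(n^{-1/4+3\varepsilon})$ — Nicaud gives a word of length $n^{7/8+7\varepsilon}(1+o(1))$ and rank $\le n^{1/8+4\varepsilon}$, so Theorem~\ref{thm:about_threshold} yields $\rt\le n^{7/8+7\varepsilon}\cdot n\cdot n^{3/8+12\varepsilon}/6 \cdot(1+o(1))$; wait, more carefully $r^3\le n^{3/8+12\varepsilon}$ and $|w|+n=n(1+o(1))$, so $\rt\le n^{1+3/8+12\varepsilon}(1+o(1))=n^{11/8+12\varepsilon}(1+o(1))$, which is $n^{3/2+o(1)}$ after letting $\varepsilon\to 0$ with $n$. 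On the complementary event — probability $O(n^{-1/4+3\varepsilon})$ — we have no good word, but we may bound $\rt$ by Pin's general cubic bound $\frac{n^3-n}{6}$, contributing $O(n^{-1/4+3\varepsilon})\cdot\frac{n^3}{6} = O(n^{11/4+3\varepsilon})$ to the expectation — which is too large. The main obstacle is precisely this: the second error term in item~(2) decays only polynomially, so a naive cubic fallback does not give $n^{3/2+o(1)}$. The fix is to iterate: apply item~(1) first (exponentially small failure probability, but the resulting threshold bound $n^{7/4+6\varepsilon}$ is itself too large to combine with the cubic fallback in expectation), so one actually layers the two events — on the (exponentially unlikely) complement of item~(1)'s event fall back to the cubic bound, contributing $\exp(-n^{\varepsilon}/4)\cdot n^3 = o(1)$; on item~(1)'s event but the complement of item~(2)'s event use the $n^{7/4+6\varepsilon}$ bound weighted by $O(n^{-1/4+3\varepsilon})$, contributing $O(n^{3/2+9\varepsilon})$; and on item~(2)'s event use $n^{11/8+12\varepsilon}$. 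Taking $\varepsilon\to 0$ slowly as $n\to\infty$ makes the dominant term $n^{3/2+o(1)}$, and dividing by $P_n\to 1$ to pass from the uniform distribution on all automata to the uniform distribution on synchronizing ones changes nothing asymptotically. The bookkeeping of which event dominates, and the choice of how $\varepsilon$ tends to $0$, is the only delicate point; everything else is a direct substitution into Theorem~\ref{thm:about_threshold}.
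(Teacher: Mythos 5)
Your proposal is correct and follows essentially the same route as the paper: apply Theorem~\ref{thm:about_threshold} with $d=n-1$ to Nicaud's two probabilistic bounds, obtain $n^{7/4+6\varepsilon}(1+o(1))$ and $n^{11/8+12\varepsilon}(1+o(1))$ respectively, and for the expectation layer the three events (item~(2), item~(1)$\setminus$item~(2), complement of item~(1)) against their probabilities $1$, $O(n^{-1/4+3\varepsilon})$, $O(\exp(-n^\varepsilon/4))$, then pass $\varepsilon\to 0$. The self-correction midway — noticing that a naive cubic fallback combined only with item~(2) is too weak and that one must layer item~(1) between them — is exactly the decomposition the paper uses.
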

\begin{proof}
Because a random binary automaton is synchronizing with high probability, the probabilities in~(1) and~(2) remain asymptotically at least the same for a random binary synchronizing automaton.

Now, by applying our Theorem~\ref{thm:about_threshold} (with $d=n-1$) to~(1) and~(2) we get that a random binary synchronizing automaton has a reset word of
\begin{enumerate}
\item length $n^{7/4+6\varepsilon}(1+o(1))$ with probability at least $1-O(\exp(-n^\varepsilon/4))$,
\item length $n^{11/8+12\varepsilon}(1+o(1))$ with probability at least $1-O(n^{-1/4+3\varepsilon})$.
\end{enumerate}
Claim~(1) is the first statement of the corollary.

Calculating an upper bound of the average of (1), (2) and the general cubic bound applied to the rest of automata we get:
\begin{eqnarray*}
& & n^{11/8+12\varepsilon}(1+o(1)) + \\
& & n^{7/4+6\varepsilon}(1+o(1)) \cdot O(n^{-1/4+3\varepsilon}) + \\
& & (n^3-n)/6 \cdot O(\exp(-n^\varepsilon/4)) \\
& = & O(n^{11/8+12\varepsilon}) + O(n^{6/4+9\varepsilon}) \\
& = & O(n^{3/2+12\varepsilon}).
\end{eqnarray*}
\qed
\end{proof}




\section{Synchronizing finite prefix codes}
\label{sec:prefix_codes}

A \emph{finite prefix code} (Huffman code) $\mathcal{T}$ is a set of $N$ ($N > 0$)
non-empty words $\{w_1, \ldots, w_N\}$ from $\Sigma^*$, such that no
word in $\mathcal{T}$ is a prefix of another word in $\mathcal{T}$.
A finite prefix code $\mathcal{T}$ is \emph{maximal} if adding any
word $w \in \Sigma^*$ to $\mathcal{T}$ does not result in a finite
prefix code. We consider only maximal prefix codes.
A \emph{reset word} for the code $\mathcal{T}$ is a
word $w$ such that for any $u \in \Sigma^*$ the word $uw$ is a
sequence of words from $\mathcal{T}$.

One can easily see that a finite prefix code corresponds naturally to a DFA called the \emph{decoder}, whose states are proper prefixes of words from this code \cite{BiskupPlandowski2009HuffmanCodes}.
Formally, for a finite prefix code $\mathcal{T}$ we have the corresponding
\emph{decoder} $\mathrsfs{A}_\mathcal{T}$, which is the DFA
$(Q,\Sigma,\delta)$ with $Q = \{q_v \mid v\text{ is a proper prefix of a word in }\mathcal{T}\},$ and $\delta$ defined as follows:
$$\delta(q_v,a) = \begin{cases}
q_{va} & \text{if $va \not\in \mathcal{T}$;}\\
q_\varepsilon & \text{otherwise.}\\
\end{cases}$$
If for an edge from a state $q_v$ to the root $q_\varepsilon$ we
assign an output symbol associated with the word $q_v$, the decoder
can read a compressed input string and produce the decompressed
output according to the code $\mathcal{T}$. Observe that a reset
word $w$ for $\mathcal{T}$ is a reset word for the decoder
$\mathrsfs{A}_\mathcal{T}$, and $Q.w = \{q_\varepsilon\}$.
The decoder $\mathrsfs{A}_\mathcal{T}$ naturally corresponds to a
rooted $k$-ary tree. We say that $q_\varepsilon$ is the \emph{root}
state. The \emph{level} of a state $q_v \in Q$ is $|v|$, which is also the length of the shortest path from $q_\varepsilon$ to $q_v$ in the decoder DFA. The \emph{height} of $\mathrsfs{A}_\mathcal{T}$ is the maximal level of
the states in $Q$; this is also the maximal length of words from
$\mathcal{T}$.

\begin{remark}
If $N = |\mathcal{T}|$ and $k = |\Sigma|$, then the number $n$ of
states of $\mathrsfs{A}_\mathcal{T}$ is $\frac{N-1}{k-1}$.
Note that it does not depend on the length of the words in the code.
\end{remark}

In~\cite{Biskup2008HuffmanCodes,BiskupPlandowski2009HuffmanCodes} Biskup and Plandowski gave an $O(nh \log n)$ upper bound for the reset thresholds of binary decoders, where $h$ is the maximum length of a word from the code. Since $h$ can be linear in terms of $n$, this is an $O(n^2 \log n)$ general bound. Later, it was improved to $O(n^2)$ in~\cite{BBP2011QuadraticUpperBoundInOneCluster}. However, in the worst case, only decoders with a reset threshold in $\varTheta(n)$ are known \cite{BiskupPlandowski2009HuffmanCodes}, and it was conjectured that every synchronizing decoder possess a synchronizing word of length $O(n)$. Thus, there was a big gap between the upper and lower bounds for the worst case.

The following lemma is a simple generalization of~\cite[Lemma 14]{BiskupPlandowski2009HuffmanCodes} to $k$-ary decoders.
\begin{lemma}\label{lem:rank_log_word}
Let $\mathrsfs{A}_\mathcal{T}=(Q,\Sigma,\delta)$ be the $n$-state
$k$-ary synchronizing decoder of a finite prefix code $\mathcal{T}$.
There is a word $w$ of rank $r \le \lceil\log_{k}{n}\rceil$ and
length $r$.
\end{lemma}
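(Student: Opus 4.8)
The plan is to exploit the tree structure of the decoder together with the fact that, from the root $q_\varepsilon$, every letter moves one level down the tree (unless it completes a codeword, in which case it returns to the root). First I would observe that a word $w$ of length $r$ has rank equal to $|Q.w|$, and that $Q.w = \{q_v.w \mid q_v \in Q\}$; since all states lie at levels $0,1,\ldots,h-1$, I want to pick $w$ so that after reading it, every state is forced into one of at most $\lceil \log_k n \rceil$ target states. The natural candidate is to build $w$ as a concatenation $w = w_1 w_2 \cdots w_r$ where, at each stage, we consider the set $S_i$ of states reachable so far and split it according to which letter we read next. The key combinatorial fact is that a $k$-ary tree with $n$ leaves (equivalently, the decoder with $n$ states, viewed as the set of proper prefixes) has depth at least $\lceil \log_k n\rceil$ in the relevant sense, but more importantly, reading a single well-chosen letter at least halves (more precisely, divides by roughly $k$) the number of "branches" that remain distinguishable.

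The main step I would carry out: induct on the current image set. Start with $S_0 = Q$; by definition of the decoder, for each letter $a$ the set $Q.a$ consists of $q_\varepsilon$ together with states $q_{va}$ for proper prefixes $v$ with $va$ a proper prefix — but the crucial point is that the map $q_v \mapsto \delta(q_v,a)$ on the whole tree collapses many states. Concretely, following the argument of \cite[Lemma~14]{BiskupPlandowski2009HuffmanCodes}: at each step, among the $k$ letters, choose the letter $a$ that minimizes $|S_i.a|$; a counting/pigeonhole argument shows $\min_{a\in\Sigma}|S_i.a| \le \lceil |S_i|/k \rceil$ is too weak, so instead one uses that the states of $S_i$, being prefixes of codewords, are distributed among the $k$ children-subtrees of whatever subtree they currently occupy, and reading a letter that descends into the largest such subtree reduces the count by a factor close to $k$. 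Iterating $r$ times brings the rank down to $1$-ish, and the bound $r \le \lceil \log_k n\rceil$ falls out because each step divides by (at least) $k$ in an appropriate measure, starting from $n$.

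The hard part — and the place I would be most careful — is making the "divide by $k$" step precise for $k$-ary (rather than binary) decoders, since the decoder is not a full $k$-ary tree: some internal nodes have fewer than $k$ non-root children (those whose missing child is a codeword, hence an edge back to the root). The generalization to arbitrary $k$ therefore requires tracking which states have been sent back to $q_\varepsilon$ versus which remain deep in the tree, and arguing that the states sent to the root only help (they coincide with a single state). I expect the clean way to phrase it is: define $m_i$ = maximum level among states in $S_i$, show that one letter can be chosen so that $|S_i.a| \le \lceil |S_i| / k\rceil$ fails in general but the weaker statement "the number of states of $S_i$ at the deepest occupied subtree shrinks" does hold, and that after $\lceil\log_k n\rceil$ letters the image has collapsed. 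Since this is a direct adaptation of the published binary argument, I would present it as a short induction, cite \cite{BiskupPlandowski2009HuffmanCodes} for the binary base case, and only spell out the modification needed to replace the factor $2$ by $k$.
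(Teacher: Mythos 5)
Your plan takes a genuinely different route from the paper's proof, and the route as sketched has an unresolved gap that you yourself flag but do not close.

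Your idea is greedy and local: build $w$ one letter at a time, at each step choosing the letter that shrinks the current image $S_i$, and argue that a well-chosen letter cuts $|S_i|$ by roughly a factor of $k$. This ``divide by $k$ per step'' claim is the crux, and it is not substantiated; in fact it fails as a step-by-step invariant. For instance, take the complete binary code of depth $3$ (codewords $\{0,1\}^3$, so $n=7$ states). One checks that $|Q.a|=4$ for each letter, $|Q.aa'|=3$ for each two-letter word, and $|Q.aa'a''|=3$ again --- the image plateaus at $3$ and no letter compresses it further at the third step. So $\min_a|S_i.a|$ can fail to decrease at all, let alone shrink by a factor of $k$. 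The conclusion of the lemma still holds here ($\lceil\log_2 7\rceil=3$), but a greedy per-step argument cannot see why, and your attempted patch (tracking ``children-subtrees'' and the deepest occupied level) is not developed into an actual invariant. Deferring the hard case to \cite{BiskupPlandowski2009HuffmanCodes} as a ``base case'' of an induction also does not match the structure of that lemma: there is no induction on $k$ happening, and the $k$-ary statement is not reducible to the binary one by an inductive step.

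The paper's argument avoids step-by-step compression entirely and counts over \emph{words} rather than over states. For a word $w$, let $Q(w)=\{q_v.w\mid q_v\in Q,\ \text{no prefix of }w\text{ maps }q_v\text{ to }q_\varepsilon\}$. The two key observations are: (i) for two distinct words $w_1\ne w_2$ of the same length $r$ the sets $Q(w_1)$ and $Q(w_2)$ are disjoint (since a state in $Q(w)$ must be of the form $q_{vw}$ with $v\ne\varepsilon$, and the length-$r$ suffix of $vw$ recovers $w$); and (ii) every state in $Q(w)$ has level at least $r+1$. Hence if every one of the $k^r$ words of length $r$ had nonempty $Q(w)$, there would be at least $k^r$ states at level $\ge r+1$, plus at least $r+1$ states at levels $0,\dots,r$, forcing $k^r+r+1\le n$ and thus $r<\log_k n$. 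Taking $r=\lceil\log_k n\rceil$ therefore yields a word $w$ of length $r$ with $Q(w)=\emptyset$: every state's trajectory under $w$ passes through the root, so $Q.w$ is contained in the set of states $q_\varepsilon.u$ for $u$ a suffix of $w$, which is small. This is a single global pigeonhole over the $k^r$ words, not an iterated compression; that is the idea your sketch is missing, and it is what makes the $k$-ary generalization just as short as the binary one.
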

\begin{proof}
For a word $w$, we define
$$Q(w) = \{q_v.w \mid q_v\in Q\text{ such that no prefix of $w$ maps $q_v$ to $q_\varepsilon$}\}.$$
Consider $r > 0$.
Observe that for two distinct words $w_1, w_2$ of the same length $r$ the sets $Q(w_1)$ and $Q(w_2)$ are disjoint. Also the states in $Q(w)$ are of level at least $r+1$. If for all words of length $r$ the sets $Q(w)$ are non-empty, then there are at least $k^r$ states in $Q$ of level at least $r+1$, because there are $k^r$ different words of length $r$.
Then $k^r+r+1 \le n$ and $r < \log_{k}{n}$.
Hence, if $r = \lceil \log_{k}{n}\rceil$ then there exists a word
$w$ with the empty $Q(w)$. Since any state is mapped to
$q_\varepsilon$ by a prefix of $w$, the rank of $w$ is at most
$|w|=r$. \qed
\end{proof}

Since there exists a short word of small rank $r$, we can apply
Theorem~\ref{thm:about_threshold} to improve the general upper
bounds for the reset threshold of decoders.

\begin{corollary}\label{cor:decoder_bounds}
Let $\mathrsfs{A}_\mathcal{T}=(Q,\Sigma,\delta)$ be the $n$-state
$k$-ary synchronizing decoder of a finite prefix code $\mathcal{T}$, and let $r =
\lceil\log_{k}{n}\rceil$. Then
\begin{enumerate}
\item $\rt(\mathrsfs{A}_\mathcal{T}) \leq \begin{cases}
2 + (r+n-1)(\frac{r^3-r}{6}-1) & \text{if $r \ge 4$};\\
2 + (r+n-1)(r-1)^2 & \text{if $r \le 3$}.\\
\end{cases}$
\item Any pair of states from $Q$ is compressible by a word of length at most $$r+(r+n-1)\frac{r^2-r}{2}.$$
\end{enumerate}
\end{corollary}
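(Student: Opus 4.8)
The plan is to apply Theorem~\ref{thm:about_threshold} directly to the word $w$ furnished by Lemma~\ref{lem:rank_log_word}, so the whole argument reduces to identifying the right parameters. First I would set $w$ to be the word of rank $r = \lceil \log_k n \rceil$ and length $|w| = r$ given by Lemma~\ref{lem:rank_log_word}, so that in the notation of Theorem~\ref{thm:about_threshold} we have $|Q.w| = r$ and the ``short word of small rank'' is already in hand.

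Next I would choose $d$. The decoder is strongly connected (every state reaches $q_\varepsilon$ via some word, and $q_\varepsilon$ reaches every state), so the sink component $\mathrsfs{S}$ is the whole automaton, $S = Q$. I would take $d = n-1$: this is the standard bound $\Sigma^{\le n-1}$ being complete for $V_Q = \mathbb{R}^n$ with respect to any stochastic vector once $\mathrsfs{A}_\mathcal{T}$ is synchronizing (the ascending chain argument caps the required length at $n-1$), and since $S = Q$ the condition that for each $q \in Q$ there is a word $u_q \in \Sigma^{\le d}$ with $q.u_q \in S \cap Q.w = Q.w$ is likewise met for $d = n-1$ because $q_\varepsilon \in Q.w$ (as $w$ has a prefix sending everything to the root) and every state reaches $q_\varepsilon$ within $n-1$ steps. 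Plugging $|w| = r$ and $d = n-1$ into the two cases of Theorem~\ref{thm:about_threshold} gives $\rt(\mathrsfs{A}_\mathcal{T}) \le (r + n - 1)\left(\frac{r^3-r}{6}\right) - (n-1)$ for $r \ge 4$ and $\rt(\mathrsfs{A}_\mathcal{T}) \le r + (r+n-1)(r-1)^2$ for $r \le 3$; the first of these is slightly sharper than the stated $2 + (r+n-1)(\frac{r^3-r}{6}-1)$, so I would just observe that the stated bound follows (comparing $(r+n-1)\frac{r^3-r}{6} - (n-1)$ with $(r+n-1)(\frac{r^3-r}{6}-1) + 2 = (r+n-1)\frac{r^3-r}{6} - (r+n-1) + 2$, and noting $-(n-1) \le -(r+n-1)+2$ iff $r \le 2$, so for $r \ge 4$ one should simply quote the Pin-based case directly; I would phrase the corollary's bound as the immediate output of the theorem rather than fuss over the additive constant). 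For the pair-compression statement, the ``moreover'' clause of Theorem~\ref{thm:about_threshold} gives exactly that any pair is compressible by a word of length at most $|w| + (|w| + d)\frac{r^2 - r}{2} = r + (r + n - 1)\frac{r^2-r}{2}$, which is Claim~(2) verbatim.

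The main obstacle — really the only thing requiring care — is verifying the hypotheses of Theorem~\ref{thm:about_threshold} for the decoder, namely that $\Sigma^{\le n-1}$ is complete for $V_Q$ with respect to every stochastic vector and that the reachability condition into $Q.w$ holds with $d = n-1$. Both hinge on the elementary facts that $\mathrsfs{A}_\mathcal{T}$ is strongly connected and that $q_\varepsilon \in Q.w$, together with the fact that $q_\varepsilon$ is reachable from any state in at most $n-1$ steps (take a shortest path); the completeness-with-respect-to-any-stochastic-vector part is the same ascending-chain / Criterion~\ref{crit:syn} reasoning already used in the proof of Theorem~\ref{thm:about_threshold}, so I would just invoke it. After that, it is purely substitution of $|w| = r = \lceil \log_k n \rceil$ and $d = n-1$ into the inequalities of the theorem.
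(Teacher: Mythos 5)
Your overall strategy—invoke Lemma~\ref{lem:rank_log_word} to get a word $w$ of rank and length at most $r=\lceil\log_k n\rceil$, then plug $|w|=r$ and $d=n-1$ into Theorem~\ref{thm:about_threshold}—is exactly the paper's approach, and Claim~(2) indeed falls out verbatim. However, there is a genuine gap in your derivation of Claim~(1). Direct substitution into Theorem~\ref{thm:about_threshold} yields $r+(r+n-1)\bigl(\frac{r^3-r}{6}-1\bigr)$ for $r\ge 4$ (equivalently $(r+n-1)\frac{r^3-r}{6}-(n-1)$) and $r+(r+n-1)(r-1)^2$ for $r\le 3$, each of which exceeds the stated bound by exactly $r-2$ whenever $r\ge 3$. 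Your first assertion that the direct bound is ``slightly sharper'' than the stated $2+(r+n-1)(\frac{r^3-r}{6}-1)$ is simply the wrong direction of the inequality, as you seem to half-notice in the parenthetical; and your fallback of ``not fussing over the additive constant'' means you prove a strictly weaker statement, not the corollary as written. The paper closes this gap with an additional refinement you omit: it invokes Proposition~5 of~\cite{Pin1972Utilisation}, which says a shortest word compressing $Q.w$ has length at most $|w|+n-|Q.w|+1=n+1$; replacing one factor $(r+n-1)$ by $(n+1)$ in the first compression step trades $r+(r+n-1)$ for $r+(n+1)$, saving exactly $r-2$ and producing the clean constant $2$. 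Without that extra lemma, the stated constant is out of reach.

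A secondary, more minor point: to verify the reachability hypothesis of Theorem~\ref{thm:about_threshold} you argue that $q_\varepsilon\in Q.w$ ``as $w$ has a prefix sending everything to the root.'' That is not what Lemma~\ref{lem:rank_log_word} gives—it says every state is sent to $q_\varepsilon$ by \emph{some} (state-dependent) prefix of $w$, and it does not follow that $q_\varepsilon\in Q.w$. Fortunately this doesn't matter: the decoder is strongly connected, $Q.w$ is non-empty, and $S=Q$, so every state reaches some state of $Q.w$ within $n-1$ steps and the hypothesis holds with $d=n-1$. You should justify it by strong connectivity rather than by the unsubstantiated claim $q_\varepsilon\in Q.w$.
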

\begin{proof}
For Claim~1 we apply Theorem~\ref{thm:about_threshold} with $w$
being the word of rank at most $r$ and length at most $r$ from
Lemma~\ref{lem:rank_log_word}, and $d = n-1$. This gives
$(r+n-1)(\frac{r^3 - r}{6})-(n-1) = r+(r+n-1)(\frac{r^3 - r}{6}-1)$ for $r \ge 4$.

We can slightly refine the bound by Pin's result \cite[Proposition
5]{Pin1972Utilisation}, which states that if we can compress $Q.w$,
then a shortest compressing word for $Q.w$ has length at most
$|w|+n-|Q.w|+1$. Thus if $|Q.w|=r$ this is $n+1$, and we end up with
$$r-(r+n-1)+(n+1)+(r+n-1)\left(\frac{r^3 - r}{6}-1\right) = 2 +
(r+n-1)\left(\frac{r^3-r}{6}-1\right).$$
Similar calculation applies when $r \le 3$.

Claim~2 follows directly from Theorem~\ref{thm:about_threshold}.
\qed
\end{proof}

If the size $k$ of the alphabet is fixed,
Corollary~\ref{cor:decoder_bounds} yields $O(n \log^3 n)$ upper
bound for the reset threshold, and $O(n \log^2 n)$ upper bound for
the length of a word compressing a pair of states of a decoder.

Note that the word $w$ from Lemma~\ref{lem:rank_log_word} can be
easily computed in $O(n^2)$ time, since there are $O(n)$ words of
length at most $\lceil\log_{k}{n}\rceil$. Then a reset word within
the bound of Corollary~\ref{cor:decoder_bounds} can be computed in polynomial time by the algorithm discussed in Section~\ref{sec:algorithms}.

\subsection{Random binary decoders}

By a \emph{uniformly random $n$-state decoder} we understand a decoder chosen uniformly at random from the set of all $n$-state decoders.
We consider here random binary decoders.
In~\cite{FJTZ2003} it was proved that a uniformly random binary $n$-state decoder is synchronizing with a probability that tends to $1$ as $n$ goes to infinity.
Since every binary $n$-state decoder correspond to a binary tree with $n+1$ leaves, the number of all such decoders is the $n$-th Catalan number.
Note that it is known that the average height of a binary $n$-state decoder is asymptotically $\varTheta(\sqrt{n})$ \cite{Knuth2005ArtOfComputerProgrammingIV}.

\begin{theorem}
The expected reset threshold of a uniformly random synchronizing binary $n$-state decoder is at most $O(n \log n)$.
\end{theorem}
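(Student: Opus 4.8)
The plan is to combine the general bound from Corollary~\ref{cor:decoder_bounds} with known estimates on the height of a random binary tree. Recall that Corollary~\ref{cor:decoder_bounds}, applied with $d = h$ rather than $d = n-1$, actually gives a reset word of length $O((r+h)\cdot r^3)$ where $r = \lceil \log_2 n\rceil$ and $h$ is the height of the decoder; one must first check that Theorem~\ref{thm:about_threshold} can be applied with $d = h$, since $\Sigma^{\le h}$ is complete for $V_S$ for a decoder (every state is reachable from the root within $h$ steps, and the root maps everything to itself, so the ascending chain of subspaces $\langle \Sigma^{\le i}\rangle$ stabilizes by level $h$) and every state reaches $q_\varepsilon \in Q.w$ within $h$ steps. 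Thus for a decoder of height $h$ we have a deterministic bound $\rt(\mathrsfs{A}_\mathcal{T}) = O((h + \log n)\log^3 n)$.

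Next I would split the expectation over the random decoder into the typical regime and the tail. Let $H$ denote the (random) height of a uniformly random binary $n$-state decoder. It is classical (Flajolet--Odlyzko; see also \cite{Knuth2005ArtOfComputerProgrammingIV}) that $\mathbb{E}[H] = \Theta(\sqrt{n})$ and, more importantly, that $H$ has exponential tails around its mean: $\Pr[H \ge t\sqrt{n}] \le C\exp(-c\,t)$ for suitable constants, and in fact $H \le n$ always. So on the typical event $\{H \le \sqrt{n}\log n\}$, which holds with probability $1 - n^{-\omega(1)}$, the deterministic bound above gives a reset word of length $O(\sqrt{n}\log^4 n)$, which is $o(n)$ — comfortably within $O(n\log n)$. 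Wait: that already beats the claimed bound on the typical event, so the bottleneck must be the tail.

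On the complementary (rare) event, I would simply use the worst-case deterministic bound $\rt \le O(n^2)$ from \cite{BBP2011QuadraticUpperBoundInOneCluster} (or even Corollary~\ref{cor:decoder_bounds} with $d = n-1$, which gives $O(n\log^3 n)$ always, already good enough). Since $\Pr[H \ge \sqrt{n}\log n]$ decays faster than any polynomial — indeed like $\exp(-\Omega(\log n)) = n^{-\Omega(1)}$ with room to spare if we push the threshold a little higher, say to $\sqrt{n}(\log n)^2$ — the contribution of the tail to the expectation is at most $O(n\log^3 n)\cdot n^{-\omega(1)} = o(1)$. Hence
\[
\mathbb{E}[\rt] \le O(\sqrt{n}\,\mathrm{polylog}(n)) + o(1) = O(n\log n).
\]
The main obstacle is not the probabilistic estimate, which is standard, but rather the bookkeeping needed to justify taking $d = h$ in Theorem~\ref{thm:about_threshold} for decoders and tracking how $h$ enters the resulting bound from Corollary~\ref{cor:decoder_bounds}; once that is in place the result follows with a large margin, which is why the final bound is stated loosely as $O(n\log n)$ rather than the sharper $O(\sqrt{n}\,\mathrm{polylog}(n))$ that the argument actually yields — the authors presumably state the weaker bound to keep the statement clean or because conditioning on synchronization requires minor extra care in the tail.
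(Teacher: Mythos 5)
The central claim you rely on — that Theorem~\ref{thm:about_threshold} can be invoked with $d = h$ (the height of the decoder) because ``$\Sigma^{\le h}$ is complete for $V_S$'' — is not established, and I believe it is false. Your justification conflates combinatorial reachability with linear-algebraic completeness: the fact that every state is reachable from the root within $h$ steps, and that every state returns to the root within $h$ steps, says nothing about whether the ascending chain of subspaces $V_i = \langle g\,[w] : |w| \le i\rangle$ stabilizes by step $h$. That chain is guaranteed to stabilize only within $\dim V_S - 1 \le n-1$ steps, and there is no reason specific to decoders why it should stabilize within the height. A small example makes this concrete: take the binary maximal prefix code $\{aa,\,ab,\,ba,\,bba,\,bbb\}$, with states $q_\varepsilon, q_a, q_b, q_{bb}$. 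The maximal state level is $2$, but for the stochastic vector $g = [q_a] = (0,1,0,0)$ one computes that $g\,[w]$ for $|w|\le 2$ never produces the direction $[q_{bb}]$; the span has dimension $3$, not $4$. So $\Sigma^{\le 2}$ is not complete with respect to all stochastic $g \in V_S$, even though $h=2$. (Whether you interpret $h$ as max state level or max codeword length, the point stands: there is no general argument forcing the chain to stabilize at depth $h$.) This is not a small bookkeeping issue; it is the load-bearing step. If $d=h$ did hold for all decoders, the resulting $O(\sqrt{n}\,\mathrm{polylog}\,n)$ expected bound would be a substantially stronger result, and the authors' own conclusion section, which lists the expected reset threshold of random decoders as an open problem with only the $O(n\log n)$ bound known, indicates they did not have access to such an improvement.

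Beyond this gap, your route is genuinely different from the paper's, so even with the gap fixed it would not match their argument. The paper does not track the height of the tree at all. Instead it observes that the length $\ell$ of the unique word in $a^*\cap\mathcal{T}$ (the left-most branch) makes the letter $a$ one-cluster with a cycle of length $\ell$, then uses the known $O(\ell n)$ upper bound for one-cluster automata. The probabilistic ingredient is a precise count from~\cite{FJTZ2003} of the fraction of decoders whose code contains $a^\ell$, showing $\ell \ge 2\log_2 n$ occurs with probability $O(1/n)$; the rare event is handled by the deterministic $O(n\log^3 n)$ bound of Corollary~\ref{cor:decoder_bounds}, giving an $o(1)$ tail contribution. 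This makes the argument hinge on the left-most branch length (typically $\Theta(\log n)$), not the height (typically $\Theta(\sqrt{n})$), and it keeps all uses of Theorem~\ref{thm:about_threshold} at the unconditional setting $d=n-1$, sidestepping exactly the completeness issue that your proposal runs into.
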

\begin{proof}
Let $\mathcal{T}$ be the code (set of codewords) of a uniformly random synchronizing binary $n$-state decoder.
Let $a$ be the first letter of the alphabet.
First we show that the length of the left-most branch of the decoder is at most logarithmic with high probability. In other words, the unique word from $a^{*} \cap \mathcal{T}$ has length $\ell \in O(\log n)$ with high probability.
Because a uniformly random binary $n$-state decoder is synchronizing with high probability, these probabilities transfer to a uniformly random synchronizing binary $n$-state decoder.
Note that if $a^\ell$ is in the code, then the the letter $a$ in the decoder is one-cluster with the cycle of length $\ell$.
Then we apply the upper bound $O(\ell n)$ on the reset thresholds of one-cluster automata whose the length of the cycle is $\ell$ \cite{BBP2011QuadraticUpperBoundInOneCluster}.

From the proof of \cite[Lemma 5.9]{FJTZ2003} we have that the fraction of decoders whose code contains $a^\ell$ (for $1 \le \ell \le n-1$) is equal to:
$$\frac{\ell (n+1)(n)\dots(n-\ell+1)}{(2n-1)(2n-2)\dots(2n-\ell-1)}.$$

Let $d$ be such that $3 \le d \le n-1$. Then 
$$\frac{n-d+1}{2n-d-1} = \frac{1}{2} - \frac{d-3}{2(2n-d-1)} \le 1/2$$ 
Hence for $\ell \geq 3$, we have 
$$\frac{\ell (n+1)(n)\dots(n-\ell+1)}{(2n-1)(2n-2)\dots(2n-\ell-1)} \le \frac{\ell}{2^{\ell+1}}.$$
It follows that the probabilities that $a^\ell$ is in the code for $\ell \ge 2\log_{2}{n}$ is at most $O(1/n)$.


For the case $\ell < 2\log_{2}{n}$, we use the upper bound $O(\ell n)$ from \cite{BBP2011QuadraticUpperBoundInOneCluster} and for $\ell \ge 2\log_{2}{n}$ with probability $O(1/n)$ we use the general upper bound $O(n \log^3 n)$ for decoders from Corollary~\ref{cor:decoder_bounds}.
Summing up these cases yield our upper bound on the expected reset threshold of the decoder:
$$O(n \log n) + O(n \log^3 n) \cdot O(1/n) = O(n \log n).$$
\qed
\end{proof}

As in the general case, a reset word of average length $O(n \log n)$ can be computed in polynomial time. This can be done using the algorithm discussed in Subsection~\ref{subsec:quasi-one-cluster}, which finds a reset word within the bounds for (quasi-)one-cluster automata.

\subsection{Lower bounds}

Biskup and
Plandowski~\cite{Biskup2008HuffmanCodes,BiskupPlandowski2009HuffmanCodes}
presented a series of binary $n$-state decoders with the reset
threshold $2n-5$ for even $n$ and $2n-7$ for odd $n$. However, only
binary decoders were studied. Here we present a series of $k$-ary
decoders for every $k \ge 3$ with large reset thresholds. This shows
that, in the worst case, also for arbitrary large
alphabets a decoder can have the reset threshold in $\varTheta(n)$.

\begin{figure}[ht]
\unitlength 7pt \scriptsize
\begin{center}\begin{picture}(32,34)(0,0)
\gasset{Nh=2,Nw=6.5,Nmr=1.5,ELdist=0.5,loopdiam=2}
\node[Nframe=n](in)(18,32){}
\node(0)(0,32){$0$}\imark(0)
\node(1)(0,28){$1$}
\node(2)(8,28){$2$} \node[Nframe=n](dotsv1)(18,28){$\dots$}
\node(k)(28,28){$k$} \drawedge(0,1){$a_1$}
\drawedge[curvedepth=1,ELpos=70](0,2){$a_2$}
\drawedge[curvedepth=2,ELpos=70](0,k){$a_k$}

\node[Nframe=n](dotsh1)(0,24){$\dots$} \drawedge[dash={.1
.2}{.2}](1,dotsh1){$a_1$}

\node(k+1i)(0,20){$(k+1)i$} \node(k+1i+1)(0,16){$(k+1)i+1$}
\node(k+1i+2)(8,16){$(k+1)i+2$}
\node[Nframe=n](dotsv2)(18,16){$\dots$}
\node(k+1i+k)(28,16){$(k+1)i+k$} \drawedge[dash={.1
.2}{.2}](dotsh1,k+1i){$a_1$} \drawedge(k+1i,k+1i+1){$a_1$}
\drawedge[curvedepth=1,ELpos=70](k+1i,k+1i+2){$a_2$}
\drawedge[curvedepth=2,ELpos=70](k+1i,k+1i+k){$a_k$}

\node[Nframe=n](dotsh2)(0,12){$\dots$} \drawedge[dash={.1
.2}{.2}](k+1i+1,dotsh2){$a_1$}

\node(k+1l)(0,8){$(k+1)\ell$} \drawedge[dash={.1
.2}{.2}](dotsh2,k+1l){$a_1$}

\node[Nframe=n](dotsh3)(0,4){$\dots$} \node(n-1)(0,0){$n-1$}
\drawedge[dash={.1 .2}{.2}](k+1l,dotsh3){$a_1$} \drawedge[dash={.1
.2}{.2}](dotsh3,n-1){$a_1$}
\end{picture}\end{center}
\caption{The decoder $\mathrsfs{X}_{n,k}$ with reset threshold
$\lceil n/(k+1)\rceil$.}\label{fig:decoder_k_long_threshold}
\end{figure}
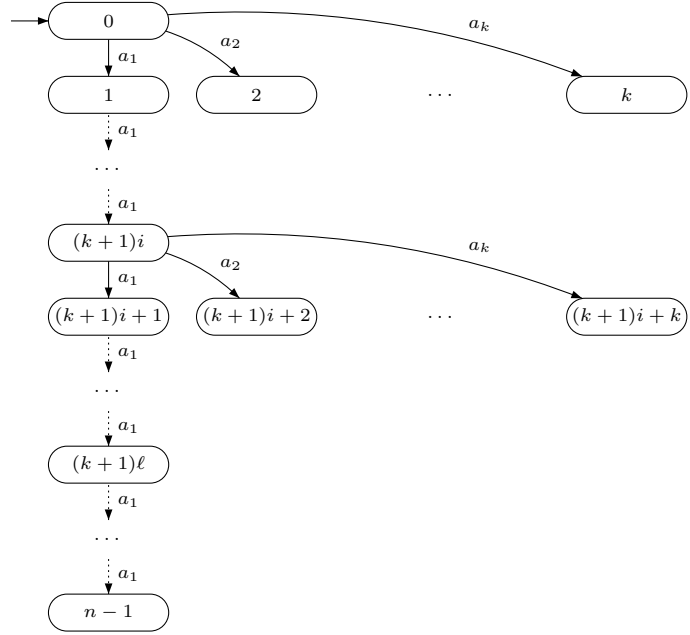

For $k \ge 3$ and $n \ge k+2$, we define $\mathrsfs{X}_{n,k} =
(Q,\Sigma,\delta)$ (shown
in~Fig.~\ref{fig:decoder_k_long_threshold}). Let $Q =
\{0,\ldots,n-1\}$ and $\Sigma = \{a_1,\ldots,a_k\}$, and let $\ell =
\lceil n/(k+1)\rceil-1$ (so $\ell \ge 1$). We define $\delta$ as
follows: For each $i$ with $0 \le i \le \ell-1$ and each $1 \le j
\le k$, if $(k+1)i+j \le n-1$ then we define: $\delta((k+1)i+1,a_j)
= (k+1)i+j$. Also for $i$ with $(k+1)\ell \le i \le n-2$ we define
$\delta(i,a_1) = i+1$. For all the remaining states $i$ and letters
$a_j$ we set $\delta(i,a_j) = 0$.

\begin{theorem}
The automaton $\mathrsfs{X}_{n,k}$ is synchronizing and its reset
threshold is $2\ell+2 = 2\lceil n/(k+1)\rceil$.
\end{theorem}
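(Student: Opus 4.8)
The plan is, first, to make the structure of $\mathrsfs{X}_{n,k}$ explicit, then to exhibit a reset word of the claimed length, and finally to rule out shorter ones by analysing the (few) subsets of states that can arise during synchronization. Set $L:=n-(k-1)\ell$; a short count of states gives $2\ell+1\le L\le 2\ell+k+1$. The first observation is that $a_1$ acts as a single cycle $C$ of length $L$ on the states $q_{a_1^0}=0,\ q_{a_1^1}=1,\ q_{a_1^2}=k+1,\ q_{a_1^3}=k+2,\dots$ (namely $q_{a_1^p}\mapsto q_{a_1^{p+1\bmod L}}$), and sends every remaining state — each of which is one of the $(k-1)\ell$ \emph{leaves} $(k+1)i+j$, $0\le i\le\ell-1$, $2\le j\le k$ — to $0$; while each letter $a_j$ with $j\ge 2$ sends the $\ell$ \emph{branching} states $q_{a_1^{2i}}=(k+1)i$ to the $\ell$ distinct leaves $(k+1)i+j$ and every other state (all non-branching spine states and all leaves) to $0$. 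The only fiddly point here is the boundary between the last block and the ``tail'' $q_{a_1^{2\ell}},\dots,q_{a_1^{L-1}}$, which occurs when $n>(k+1)\ell+1$; these tail states are again non-branching.

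For the upper bound I would verify that $w:=a_2\,a_1^{2\ell}\,a_2$ is a reset word, which in particular shows $\mathrsfs{X}_{n,k}$ is synchronizing. Tracking the image: $Q.a_2=\{0\}\cup\{(k+1)i+2 : 0\le i\le\ell-1\}$; one more $a_1$ collapses these $\ell$ leaves to $0$ and carries $0$ to $1$, so $Q.a_2a_1=\{q_{a_1^0},q_{a_1^1}\}$; since $2\ell\le L-1$, the following $2\ell-1$ letters $a_1$ merely translate this pair along $C$, giving $Q.a_2a_1^{2\ell}=\{q_{a_1^{2\ell-1}},q_{a_1^{2\ell}}\}$; and as neither position $2\ell-1$ nor $2\ell$ is branching, the final $a_2$ sends both states to $0$. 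Hence $\rt(\mathrsfs{X}_{n,k})\le|w|=2\ell+2$.

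For the lower bound, write $d(S)$ for the length of a shortest word synchronizing $S\subseteq Q$, so $\rt(\mathrsfs{X}_{n,k})=1+\min_{b\in\Sigma}d(Q.b)$. One computes $Q.a_1=C$ and $Q.a_j=S_0:=\{0\}\cup\{(k+1)i+j : 0\le i\le\ell-1\}$ for $j\ge 2$, so it suffices to show $d(S_0)=2\ell+1$ and $d(C)=2\ell+2$. I would do this by following the subsets reachable from $S_0$ and $C$ toward a singleton: each gap-one pair $\{q_{a_1^p},q_{a_1^{p+1}}\}$ is translated forward by $a_1$, and is sent by $a_j$ to $\{0\}$ exactly when neither of $p,p+1$ is branching (inside $\{0,\dots,2\ell\}$ only at $p=2\ell-1$) and to a set of the shape $\{0,\text{leaf}\}$ otherwise; a set $\{0,\text{leaf}\}$ goes under $a_1$ to $\{q_{a_1^0},q_{a_1^1}\}$ and stays of the same shape under every $a_j$; and $S_0$ (hence also $C$, since $C.a_j=S_0$) feeds into exactly these. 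A downward induction on $p$ then yields $d(\{q_{a_1^p},q_{a_1^{p+1}}\})=2\ell-p$ for $0\le p\le 2\ell-1$ (base case $p=2\ell-1$), whence $d(\{0,\text{leaf}\})=2\ell+1$, then $d(S_0)=1+\min(2\ell,2\ell+1)=2\ell+1$, and finally $d(C)=1+d(S_0)=2\ell+2$ (using $C.a_1=C$ to exclude a smaller value). Therefore $\rt(\mathrsfs{X}_{n,k})=1+\min(2\ell+2,2\ell+1)=2\ell+2$, matching the upper bound.

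The main obstacle I anticipate is purely organizational: unwinding the definition of $\delta$ into the ``cycle $C$ $+$ branching states $+$ leaves'' picture, handling the block/tail boundary correctly, and checking that no subsets beyond the handful listed above can occur while synchronizing $S_0$ or $C$. The one genuinely quantitative step is the induction giving $d(\{q_{a_1^p},q_{a_1^{p+1}}\})=2\ell-p$, and this is where the linear threshold comes from: such a pair can only be moved forward around $C$, and must travel until it first occupies two consecutive non-branching positions.
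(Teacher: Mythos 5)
Your argument is correct and reaches the exact reset threshold by a more explicit route than the paper. The paper's lower bound is conceptual: every letter flips the parity of the \emph{level} (depth in the decoder tree) for states below $(k+1)\ell$, so a pair at levels $0$ and $1$ — which the image must contain after the first two letters — cannot merge until one member is driven into the tail, costing at least $2\ell$ more letters; incidentally, the paper's intermediate claim that $Q.w_1 w_2$ contains some $p\in\{2,\dots,k\}$ fails when $w_2=a_1$, but the argument survives with $p=1$ (i.e.\ with $\{0,0.w_2\}$). Your version replaces this parity invariant by an explicit enumeration of the subsets reachable from each $Q.b$ — the gap-one pairs $\{q_{a_1^p},q_{a_1^{p+1}}\}$, the $\{0,\text{leaf}\}$-shaped pairs, and $S_0,C$ — together with exact synchronization distances; this is sharper (it separates $d(\{0,1\})=2\ell$ from $d(\{0,p\})=2\ell+1$ for $p\ge 2$) and sidesteps the slip above. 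One step to tighten: the downward induction for $d(\{q_{a_1^p},q_{a_1^{p+1}}\})=2\ell-p$ implicitly compares against $d(\{0,\text{leaf}\})$, which you determine only afterwards, so as written it is circular. The clean fix is to get the upper bound $d\le 2\ell-p$ from the chain alone, and for the lower bound to note that every transition out of a $\{0,\text{leaf}\}$ pair either stays of that shape or returns to $\{q_{a_1^0},q_{a_1^1}\}$; hence a shortest path from $\{q_{a_1^0},q_{a_1^1}\}$ to $\{0\}$ never visits a $\{0,\text{leaf}\}$ pair, must run along the chain, and therefore has length exactly $2\ell$. With that patch the whole computation closes and matches the $a_2 a_1^{2\ell} a_2$ upper bound.
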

\begin{proof}
One verifies that the action of the word $a_k (a_1)^{2\ell} a_k$
synchronizes the automaton.

Let $w$ be a shortest reset word for the automaton. Consider the
first two letters $w_1,w_2$ of $w$. Observe that $Q.w_1$ and $Q.w_1
w_2$ contains $0$. So $Q.w_1 w_2$ also contains a state $p$ from
$\{2,\ldots,k\}$.

State $0$ is at the level $0$, and state $p$ is at the level $1$.
For all states $q < (k+1)\ell$ the action of all the letters
alternates the parity of the level of the states. Thus two such
states with an odd and an even level cannot be compressed by the
action of a single letter. So, to compress $\{0,p\}$, one of the
states must be first mapped to a state $q \ge (k+1)\ell$. The
shortest such a path is from $1$ to $(k+1)\ell$ labeled by
$(a_1)^{2^\ell-1}$. Then we need one more letter ($a_k$) to
synchronize the pair. It follows that $0$ and $p$ requires a word of
length at least $2\ell$ to be compressed. Hence, the length of $w$
is at least $2+2\ell$. \qed
\end{proof}

Using a more sophisticated construction, it is possible to modify our series and obtain decoders with
slightly larger reset thresholds, though still of order
$2n/(k+1)+O(1)$. We suppose that this order of growth is tight for $k \ge 3$ up to the constant within $O(1)$.

\section{Conclusions and open problems}

We have shown constructible upper bounds for the reset threshold, which turned out to be useful in several important cases of automata.
They are obtained using a uniform approach basing on Markov chains.
In all the cases, there exists a polynomial algorithm finding a reset word of length within the bounds.
Also, note that if the \v{C}ern\'{y} conjecture is true, then our bounds become reduced; in particular, we would get $O(n \log^2 n)$ for the reset thresholds of decoders of finite prefix codes.

The questions about tight bounds in the case of finite prefix codes and random automata remain open.
For finite prefix codes the bound $O(n)$ was conjectured. Note that for some applications it can be also important to get bounds in terms of the maximal length of the words in the code (e.g.~\cite{CarpiDAllesandro2014CernyLikeProblemsForFiniteSetsOfWords}).
There is also an interesting question about the expected reset threshold of the decoder of a random finite prefix code. So far for this case we have only the bound $O(n \log n)$, which comes from the bounds for one-cluster automata.

Also, there is the open problem of designing a polynomial algorithm finding reset words within the bound of~\cite{Dubuc1998} for circular automata.

\section*{Acknowledgments}

This work was supported by the Presidential Program ``Leading Scientific Schools of the Russian Federation'', project no.\
5161.2014.1, the Russian Foundation for Basic Research, project no.\
13-01-00852, the Ministry of Education and Science of the Russian
Federation, project no.\ 1.1999.2014/K, and the Competitiveness
Program of Ural Federal University (Mikhail Berlinkov),
and by the National Science Centre, Poland under
project number 2014/15/B/ST6/00615 (Marek Szyku{\l}a).

\bibliographystyle{plain}

\begin{thebibliography}{10}

\bibitem{AGV2010}
D.~S. Ananichev, V.~V. Gusev, and M.~V. Volkov.
\newblock {Slowly synchronizing automata and digraphs}.
\newblock In {\em Mathematical Foundations of Computer Science}, volume 6281 of
  {\em LNCS}, pages 55--65. Springer, 2010.

\bibitem{AGV2013}
D.~S. Ananichev, M.~V. Volkov, and V.~V. Gusev.
\newblock {Primitive digraphs with large exponents and slowly synchronizing
  automata}.
\newblock {\em Journal of Mathematical Sciences}, 192(3):263--278, 2013.

\bibitem{BBP2011QuadraticUpperBoundInOneCluster}
M.-P. B{\'e}al, M.~V. Berlinkov, and D.~Perrin.
\newblock {A quadratic upper bound on the size of a synchronizing word in
  one-cluster automata}.
\newblock {\em International Journal of Foundations of Computer Science},
  22(2):277--288, 2011.

\bibitem{BP2009QuadraticUpperBoundInOneCluster}
M.-P. B{\'e}al and D.~Perrin.
\newblock A quadratic upper bound on the size of a synchronizing word in
  one-cluster automata.
\newblock In {\em Developments in Language Theory}, volume 5583 of {\em LNCS},
  pages 81--90. Springer, 2009.

\bibitem{BS2015AlgebraicSynchronizationCriterion}
M.~Berlinkov and M.~Szyku{\l}a.
\newblock {Algebraic Synchronization Criterion and Computing Reset Words}.
\newblock In {\em Mathematical Foundations of Computer Science}, volume 9234 of
  {\em LNCS}, pages 103--115. Springer, 2015.

\bibitem{Berlinkov2013OnTheProbabilityToBeSynchronizable}
M.~V. Berlinkov.
\newblock {On the probability to be synchronizable}.
\newblock \url{http://arxiv.org/abs/1304.5774}, 2013.

\bibitem{Berlinkov2013QuasiEulerianOneCluster}
M.~V. Berlinkov.
\newblock {Synchronizing Quasi-Eulerian and Quasi-one-cluster Automata}.
\newblock {\em International Journal of Foundations of Computer Science},
  24(6):729--745, 2013.

\bibitem{Berlinkov2014Approximating}
M.~V. Berlinkov.
\newblock {Approximating the Minimum Length of Synchronizing Words Is Hard}.
\newblock {\em Theory of Computing Systems}, 54(2):211--223, 2014.

\bibitem{Berlinkov2014OnTwoAlgorithmicProblems}
M.~V. Berlinkov.
\newblock {On Two Algorithmic Problems about Synchronizing Automata}.
\newblock In {\em Developments in Language Theory}, LNCS, pages 61--67.
  Springer, 2014.

\bibitem{Biskup2008HuffmanCodes}
M.~T. Biskup.
\newblock {Shortest Synchronizing Strings for Huffman Codes}.
\newblock In {\em Mathematical Foundations of Computer Science}, volume 5162 of
  {\em LNCS}, pages 120--131. Springer, 2008.

\bibitem{BiskupPlandowski2009HuffmanCodes}
M.~T. Biskup and W.~Plandowski.
\newblock {Shortest synchronizing strings for Huffman codes}.
\newblock {\em Theoretical Computer Science}, 410(38-40):3925--3941, 2009.

\bibitem{CarpiDAlessandro2013IndependendSetsOfWords}
A.~Carpi and F.~D'Alessandro.
\newblock {Independent sets of words and the synchronization problem}.
\newblock {\em Advances in Applied Mathematics}, 50(3):339--355, 2013.

\bibitem{CarpiDAllesandro2014CernyLikeProblemsForFiniteSetsOfWords}
A.~Carpi and F.~D'Alessandro.
\newblock {\v{C}ern\'{y}-like problems for finite sets of words}.
\newblock In {\em Proceedings of the 15th Italian Conference on Theoretical
  Computer Science, Perugia, Italy, September 17-19, 2014.}, pages 81--92,
  2014.

\bibitem{Cerny1964}
J.~{\v{C}ern\'{y}}.
\newblock {Pozn\'{a}mka k homog\'{e}nnym experimentom s kone\v{c}n\'{y}mi
  automatmi}.
\newblock {\em {Matematicko-fyzik\'alny \v{C}asopis Slovenskej Akad\'emie
  Vied}}, 14(3):208--216, 1964.
\newblock In Slovak.

\bibitem{Dubuc1998}
L.~Dubuc.
\newblock {Sur les automates circulaires et la conjecture de \u{C}ern\'{y}}.
\newblock {\em Informatique th\'eorique et applications}, 32:21--34, 1998.
\newblock In French.

\bibitem{Ep1990}
D.~Eppstein.
\newblock {Reset sequences for monotonic automata}.
\newblock {\em SIAM Journal on Computing}, 19:500--510, 1990.

\bibitem{FJTZ2003}
C.~F. Freiling, D.~S. Jungreis, F.~Theberge, and K.~Zeger.
\newblock {Almost all complete binary prefix codes have a self-synchronizing
  string}.
\newblock {\em IEEE Transactions on Information Theory}, 49(9):2219--2225,
  2003.

\bibitem{GawrychowskiStraszak2015StrongInapproximabilityOfTheShortestResetWord}
P.~Gawrychowski and D.~Straszak.
\newblock Strong inapproximability of the shortest reset word.
\newblock In {\em Mathematical Foundations of Computer Science}, volume 9234 of
  {\em LNCS}, pages 243--255. Springer, 2015.

\bibitem{GH2011}
M.~Gerbush and B.~Heeringa.
\newblock {Approximating minimum reset sequences}.
\newblock In {\em Implementation and Application of Automata}, volume 6482 of
  {\em LNCS}, pages 154--162. Springer, 2011.

\bibitem{Jurgensen2008}
H.~J\"urgensen.
\newblock {Synchronization}.
\newblock {\em Information and Computation}, 206(9-10):1033--1044, 2008.

\bibitem{Kari2003Eulerian}
J.~Kari.
\newblock {Synchronizing finite automata on Eulerian digraphs}.
\newblock {\em Theoretical Computer Science}, 295(1-3):223--232, 2003.

\bibitem{KariVolkov2013Handbook}
J.~Kari and M.~V. Volkov.
\newblock {\v{C}ern\'{y}'s conjecture and the road coloring problem}.
\newblock In {\em Handbook of Automata}. European Science Foundation, 2013.

\bibitem{KS2014SynchronizingAutomataWithLargeResetLengths}
A.~Kisielewicz and M.~Szyku{\l}a.
\newblock {Synchronizing Automata with Large Reset Lengths}.
\newblock \url{http://arxiv.org/abs/1404.3311}, 2014.

\bibitem{Knuth2005ArtOfComputerProgrammingIV}
D.~E. Knuth.
\newblock {\em The Art of Computer Programming, Volume 4, Fascicle 4:
  Generating All Trees--History of Combinatorial Generation (Art of Computer
  Programming)}.
\newblock Addison-Wesley Professional, 2006.

\bibitem{Nicaud2014FastSynchronizationOfRandomAutomata}
C.~Nicaud.
\newblock Fast synchronization of random automata.
\newblock \url{http://arxiv.org/abs/1404.6962}, 2014.

\bibitem{OM2010}
J.~Olschewski and M.~Ummels.
\newblock {The complexity of finding reset words in finite automata}.
\newblock In {\em Mathematical Foundations of Computer Science}, volume 6281 of
  {\em LNCS}, pages 568--579. Springer, 2010.

\bibitem{Pin1972Utilisation}
J.-E. Pin.
\newblock {Utilisation de l'alg{\`e}bre lin{\'e}aire en th{\'e}orie des
  automates}.
\newblock In {\em Actes du 1er Colloque AFCET-SMF de Math{\'e}matiques
  Appliqu{\'e}es II}, AFCET, pages 85--92, 1978.
\newblock In French.

\bibitem{Pin1983}
J.-E. Pin.
\newblock {On two combinatorial problems arising from automata theory}.
\newblock In {\em Proceedings of the International Colloquium on Graph Theory
  and Combinatorics}, volume~75 of {\em North-Holland Mathematics Studies},
  pages 535--548, 1983.

\bibitem{Steinberg2011AveragingTrick}
B.~Steinberg.
\newblock The averaging trick and the \v{C}ern\'{y} conjecture.
\newblock {\em International Journal of Foundations of Computer Science},
  22(7):1697--1706, 2011.

\bibitem{Steinberg2011OneClusterPrime}
B.~Steinberg.
\newblock The \v{C}ern\'{y} conjecture for one-cluster automata with prime
  length cycle.
\newblock {\em Theoretical Computer Science}, 412(39):5487--5491, 2011.

\bibitem{Volkov2008Survey}
M.~V. Volkov.
\newblock Synchronizing automata and the \u{C}ern\'{y} conjecture.
\newblock In {\em Language and Automata Theory and Applications}, volume 5196
  of {\em LNCS}, pages 11--27. Springer, 2008.

\bibitem{VP2015}
A.~S. Voynov and V.~Yu. Protassov.
\newblock {Compact noncontraction semigroups of affine operators}.
\newblock {\em Sbornik: Mathematics}, 206(7):921--940, 2015.

\end{thebibliography}

\end{document}